\newcommand{\p}{\mathbb{P}}
\newcommand{\F}{\mathbb{F}}
\newcommand{\E}{\mathbb{E}}
\newcommand{\tildeE}{\mathbb{\tilde E}}
\newcommand{\R}{\mathbb{R}}
\newcommand{\cF}{{\mathcal F}}
\newcommand{\cU}{{\mathcal U}}
\newcommand{\id}{{\rm id}}
\newcommand{\as}{\mbox{{\rm a.s.}}}
\newtheorem{theorem}{Theorem}
\newtheorem{assumption}[theorem]{Assumption}
\newtheorem{corollary}[theorem]{Corollary}
\newtheorem{definition}[theorem]{Definition}
\newtheorem{lemma}[theorem]{Lemma}
\newtheorem{remark}[theorem]{Remark}
\theoremstyle{definition}
\numberwithin{equation}{section}
\numberwithin{theorem}{section}
\begin{document}

\title{Mean-variance portfolio selection under Volterra Heston model}
\author{Bingyan Han \thanks{Department of Statistics, The Chinese University of Hong Kong, Hong Kong, byhan@link.cuhk.edu.hk}
\and Hoi Ying Wong\thanks{Department of Statistics, The Chinese University of Hong Kong, Hong Kong, hywong@cuhk.edu.hk}
}
\date{January 24, 2020}

\maketitle

\begin{abstract}
Motivated by empirical evidence for rough volatility models, this paper investigates continuous-time mean-variance (MV) portfolio selection under the Volterra Heston model. Due to the non-Markovian and non-semimartingale nature of the model, classic stochastic optimal control frameworks are not directly applicable to the associated optimization problem. By constructing an auxiliary stochastic process, we obtain the optimal investment strategy, which depends on the solution to a Riccati-Volterra equation. The MV efficient frontier is shown to maintain a quadratic curve. Numerical studies show that both roughness and volatility of volatility materially affect the optimal strategy.
\\[2ex] 
\noindent{\textbf {Keywords:} Mean-variance portfolio, Volterra Heston model, Riccati--Volterra equations, rough volatility.}
\\[2ex]
\noindent{\textbf {Mathematics Subject Classification:} 93E20, 60G22, 49N90, 60H10.}
\end{abstract}

\section{Introduction}
There has been a growing interest in studying rough volatility models \cite{gatheral2018volrough,eleuch2019char,guennoun2018asym}. Rough volatility models are stochastic volatility models whose trajectories are rougher than the paths of a standard Brownian motion in terms of the H\"older regularity. Specifically, when the H\"older regularity is less than 1/2, the stochastic path is regarded as rough. The roughness is closely related to the Hurst parameter $H$. This paper focuses on the Volterra Heston model, whose probabilistic characterization does not involve the rough paths theory \cite{eleuch2019char}. 

Rough volatility models are attractive because they capture the dynamics of historical and implied volatilities remarkably well with only a few additional parameters. Investigations of the time series of the realized volatility\footnote{See, for example, Oxford-Man Institute's realized library at \url{https://realized.oxford-man.ox.ac.uk/data}}  from high frequency data estimate the Hurst parameter $H$ to be near $0.1$, which is much smaller than the $0.5$ for the standard Brownian motion. The Hurst parameter is used to reflect the memoryness of a time series and is associated with the roughness of the fractional Brownian motion (fBM). The smaller the $H$, the rougher the time series model. Therefore, the empirical finding suggests a rougher realized path of volatility than the standard Brownian motion. Although previous studies have found a long memory property within realized volatility series, it is shown in \cite{gatheral2018volrough} that rough volatility models can generate the illusion of a long memory. However, the simulated paths with a small Hurst parameter resemble the realized ones. 

Rough volatility models also better capture the term structure of an implied volatility surface, especially for the explosion of at-the-money (ATM) skew when maturity goes to zero. More precisely, let $\sigma_{BS}(k, \tau)$ be the implied volatility of an option where $k$ is the log-moneyness and $\tau$ is the time to expiration. The ATM skew at maturity $\tau$ is defined by
\begin{equation}
\phi(\tau) \triangleq \Big| \frac{\partial \sigma_{BS}(k, \tau)}{\partial k} \Big|_{k=0}.
\end{equation}
Empirical evidence shows that the ATM skew explodes when $\tau \downarrow 0$. However, conventional volatility models such as the Heston model \cite{heston1993closed} generate a constant ATM skew for a small $\tau$. If the volatility is modeled by a fractional Brownian motion, then the ATM skew has an asymptotic property \cite{fukasawa2011asymptotic},
\begin{equation}
\phi(\tau) \approx \tau^{H - 1/2}, \text{ when } \tau \downarrow 0,
\end{equation}
where $H$ is the Hurst parameter. Rough volatility models can fit the explosion remarkably well by simply adjusting the $H$. 

Recent advances offer elegant theoretical foundations for rough volatility models. We note the martingale expansion formula for implied volatility \cite{fukasawa2011asymptotic}, asymptotic analysis of fBM \cite[Section 3.3]{fukasawa2011asymptotic}, the microstructural foundation of rough Heston models by scaling the limit of proper Hawkes processes \cite{eleuch2018micro}, the closed-form characteristic function of rough Heston models up to the solution of a fractional Riccati equation \cite{eleuch2019char}, and the hedging strategy for options under rough Heston models \cite{eleuch2018perfect}. In this paper, we are particularly interested in the affine Volterra processes \cite{abi2017affine} because these models embrace rough Heston model \cite{eleuch2019char} as a special case. The characteristic function in \cite{eleuch2019char} is extended to the exponential-affine transform formula in terms of Riccati-Volterra equations \cite{abi2017affine}.  Affine Volterra processes are applied to finance problems in \cite{keller2018affine}. In addition, an alternative rough version of the Heston model is introduced in \cite{guennoun2018asym}, where some asymptotic results are derived.

While the rough volatility literature focuses on option pricing, only a few works contribute to portfolio optimization such as \cite{fouque2018fast,fouque2018fractional,bauerle2018protfolio}. All of them consider utility maximization. To the best of our knowledge, this is the first paper to consider the mean-variance (MV) portfolio selection under a rough stochastic environment. The MV criterion in portfolio selection pioneered by Markowitz's seminal work is the cornerstone of the modern portfolio theory. We cannot give a full list of research outputs related to this Nobel Prize winning work, but mention contributions in continuous-time settings \cite{zhou2000lq,lim2002complete,lim2004incomplete,cerny2008mean,jeanblanc2012mean,shen2015mean} as important references.

\subsection{Major contributions}

We formulate the MV portfolio selection under the Volterra Heston models in a reasonably rigorous manner. As pointed out by \cite{abi2017affine,keller2018affine}, the Volterra Heston model (\ref{vol})-(\ref{stock}) has a unique in law weak solution, but its pathwise uniqueness is still an open question in general. This enforces us to consider the MV problem under a general filtration $\F$ that satisfies the usual conditions but may not be the augmented filtration generated by the Brownian motion. A similar general setting also appears in \cite{jeanblanc2012mean}. We emphasize that the probability basis and Brownian motions are always fixed for the problem in Section \ref{Sec:MV}. Therefore, our formulation is still considered to be a {\it strong formulation}, because the filtered probability space and Brownian motions are not parts of the control.

Under such a problem formulation, we construct in Section \ref{Sec:Sol} an auxiliary stochastic process $M_t$  to solve the MV portfolio selection by completion of squares. Several properties of $M_t$ are derived in Theorem \ref{Thm:M}, which is a main result of this paper. Like \cite{eleuch2019char,eleuch2018perfect,abi2017affine}, we encounter difficulties due to the non-Markovian and non-semimartingale structure of the Volterra Heston model (\ref{vol})-(\ref{stock}). Inspired by the exponential-affine formulas in \cite{abi2017affine,eleuch2019char}, the process $M_t$ is constructed upon the forward variance under a proper alternative measure. The explicit solution for the optimal investment strategy is obtained in Theorem \ref{Thm:Sol}.

Under the rough Heston model, we investigate the impact of roughness on the optimal investment strategy $u^*$. Recently, a trading strategy has been proposed to leverage the information of roughness \cite{glasserman2019buy}. The strategy longs the roughest stocks and shorts the smoothest stocks. Excess returns from this strategy are not fully explained by standard factor models like the CAPM model and Fama-French model. We examine this trading signal under the MV setting. Our theory predicts that the effect of roughness on investment strategy is opposite under different volatility of volatility (vol-of-vol). We also discuss the roughness effect on the efficient frontier.

The rest of the paper is organized as follows. Section \ref{Sec:Model} presents the Volterra Heston model and some useful properties. We discuss a related Riccati-Volterra equation. We then formulate the MV portfolio selection problem in Section \ref{Sec:MV} and solve it explicitly in Section \ref{Sec:Sol}. Numerical illustrations are given in Section \ref{Sec:Numerical}. Section \ref{Sec:Conclusion} concludes the paper. The existence and uniqueness of the solution to Riccati-Volterra equations are summarized in Appendix \ref{Appendix}. An auxiliary result used in Theorem \ref{Thm:M} is proved in Appendix \ref{App:Pos}.

\section{The Volterra Heston model}\label{Sec:Model}
Our problem is defined under a given complete probability space $(\Omega, \cF, \p)$, with a filtration $\F = \{ \cF_t \}_{ 0 \leq t \leq T}$ satisfying the usual conditions, supporting a two-dimensional Brownian motion $W = (W_1, W_2)$. The filtration $\F$ is not necessarily the augmented filtration generated by $W$; thus, it can be a strictly larger filtration. This consideration is different from some previous studies like \cite{lim2002complete,lim2004incomplete,shen2015mean} but is consistent with \cite{jeanblanc2012mean} for the MV hedging problem under a general filtration. This consideration is important because the stochastic Volterra equation (\ref{vol})-(\ref{stock}) only has a unique in law weak solution but its strong uniqueness is still an open question in general. Recall that for stochastic differential equations, $X$ is referred to as a strong solution if it is adapted to the augmented filtration generated by $W$, and a weak solution otherwise. For a weak solution, the driving Brownian motion $W$ is also a part of the solution \cite[Chapter IX]{ry1999book}. Therefore, $\F$ cannot be simply chosen as the augmented filtration generated by $W$, as extra information may be needed to construct a solution to (\ref{vol})-(\ref{stock}).

To proceed, we introduce a kernel $K(\cdot) \in L^2_{loc} (\R_+, \R)$, where $\R_+ = \{ x \in \R | x \geq 0\}$,  and make the following standing assumption throughout the paper, in line with \cite{abi2017affine,keller2018affine}. A function $f$ is called completely monotone on $(0, \infty)$, if it is infinitely differentiable on $(0, \infty)$ and $(-1)^k f^{(k)}(t) \geq 0$ for all $ t > 0 $, and $k = 0, 1, ...$.
\begin{assumption}\label{Assum:K}
	$K$ is strictly positive and completely monotone on $(0, \infty)$. There is $\gamma \in(0,2]$, such that $\int_{0}^{h} K(t)^{2} d t=O\left(h^{\gamma}\right)$ and $\int_{0}^{T}(K(t+h)-K(t))^{2} d t=O\left(h^{\gamma}\right)$ for every $T<\infty$. 
\end{assumption}

The convolutions $K*L$ and $L*K$ for a measurable kernel $K$ on $\R_+$ and a measure $L$ on $\R_+$ of locally bounded variation are defined by
\begin{equation}
(K*L)(t) = \int_{[0,t]} K(t-s)L(ds) \quad \text{and} \quad (L*K)(t) = \int_{[0,t]} L(ds)K(t-s)
\end{equation}
for $t>0$ under proper conditions. The integral is extended to $t=0$ by right-continuity if possible. If $F$ is a function on $\R_+$, let
\begin{equation}
(K*F)(t) = \int_0^t K(t-s) F(s) ds.
\end{equation}

Let $W$ be a $1$-dimensional continuous local martingale. The convolution between $K$ and $W$ is defined as
\begin{equation}
(K*dW)_t = \int_0^t K(t-s)dW_s.
\end{equation}

A measure $L$ on $\R_+$ is called {\em resolvent of the first kind} to $K$, if
\begin{equation}
K*L = L*K \equiv \id.
\end{equation}
The existence of a resolvent of the first kind is shown in \cite[Theorem 5.5.4]{gripenberg1990volterra} under the complete monotonicity assumption, imposed in Assumption \ref{Assum:K}. Alternative conditions for the existence are given in \cite[Theorem 5.5.5]{gripenberg1990volterra}.

A kernel $R$ is called the {\em resolvent} or {\em resolvent of the second kind} to $K$ if
\begin{equation}
K*R = R*K = K - R.
\end{equation}
The resolvent always exists and is unique by \cite[Theorem 2.3.1]{gripenberg1990volterra}.

Further properties of these definitions can be found in \cite{gripenberg1990volterra,abi2017affine}. Although the same notion can be defined for higher dimensions and in matrix form, it suffices for us to consider the scalar case. Commonly used kernels \cite{abi2017affine} summarized in Table \ref{Tab:Kernel} satisfy Assumption \ref{Assum:K} once $c>0$, $\alpha \in (1/2, 1]$, and $\beta \geq 0$.

\begin{table}[h!]
	\centering
	\begin{tabular}{c c c c }
		\hline
		& $K(t)$ & $R(t)$ & $L(dt)$ \\ 
		\hline \\[0.5ex]
		Constant		& $c$ & $ce^{-ct}$ & $c^{-1} \delta_0(dt)$\\ \\
		Fractional (Power-law)		& $c\,\frac{t^{\alpha-1}}{\Gamma(\alpha)}$ & $ct^{\alpha-1}  E_{\alpha, \alpha} (-ct^{\alpha})$ & $c^{-1}\,\frac{t^{-\alpha}}{\Gamma(1-\alpha)}dt$\\ \\
		Exponential	& $ce^{-\beta t}$ & $ce^{-\beta t}e^{-ct}$ & $c^{-1}(\delta_0(dt)  + \beta\,dt)$ \\\\
		\hline
	\end{tabular}
	\caption{Examples of kernels $K$ and their resolvents $R$ and $L$ of the second and first kind. $E_{\alpha,\beta}(z)=\sum_{n=0}^\infty \frac{z^n}{\Gamma(\alpha n+\beta)}$ is the Mittag--Leffler function. See \cite[Appendix A1]{eleuch2019char} for its properties. The constant $c \neq 0$.}
	\label{Tab:Kernel}
\end{table}

The variance process within the Volterra Heston model is defined as
\begin{equation}\label{vol}
V_{t}=V_{0}+ \kappa \int_{0}^{t} K(t-s)\left(\phi -V_{s}\right) d s + \int_{0}^{t} K(t-s) \sigma \sqrt{V_{s}} d B_{s},
\end{equation}
where $ dB_s = \rho dW_{1s} + \sqrt{1 - \rho^2} dW_{2s} $ and $V_0, \kappa, \phi$, and $\sigma$ are positive constants. The correlation $\rho$ between stock price and variance is also constant. As documented in \cite{gatheral2018volrough}, the general overall shape of the implied volatility surface does not change significantly, indicating that it is still acceptable to consider a variance process whose parameters are independent of stock price and time. The rough Heston model in \cite{eleuch2019char,eleuch2018perfect} becomes a special case of (\ref{vol}) once $K(t) = \frac{t^{\alpha-1}}{\Gamma(\alpha)}$. Another rough version of the Heston model studied in \cite{guennoun2018asym} is adopted to investigate the power utility maximization \cite{bauerle2018protfolio}.

Following  \cite{abi2017affine} and \cite{kraft2005optimal,cerny2008mean,zeng2013portfolio,shen2015square}, the risky asset (stock) price $S_t$ is assumed to follow
\begin{equation}\label{stock}
dS_t = S_t (r_t + \theta V_t) dt + S_t \sqrt{V_t} dW_{1t}, \quad S_0 > 0,
\end{equation}
with a deterministic bounded risk-free rate $r_t>0$ and constant $\theta \neq 0$. The market price of risk, or risk premium, is then given by $\theta \sqrt{V_t}$. The risk-free rate $r_t >0$ is the rate of return of a risk-free asset available in the market.

We take the existence and uniqueness result from \cite[Theorem 7.1]{abi2017affine} and restate it as follows.
\begin{theorem}\label{Thm:SVSol}
	(\cite[Theorem 7.1]{abi2017affine}) Under Assumption \ref{Assum:K}, the stochastic Volterra equation (\ref{vol})-(\ref{stock}) has a unique in law $\R_+ \times \R_+$-valued continuous weak solution for any initial condition $(S_0, V_0) \in \R_+ \times \R_+$.
\end{theorem}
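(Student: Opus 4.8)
The plan is to reduce the system to the variance process $V$ alone, to establish existence and nonnegativity of a weak solution for $V$ by approximation and tightness, and to obtain uniqueness in law through the affine transform. First I would observe that, given a nonnegative $V$ and the Brownian motion $W_1$, equation (\ref{stock}) is linear in $S$ and has the explicit stochastic-exponential solution
\begin{equation}
S_t = S_0 \exp\left( \int_0^t \left( r_s + \theta V_s - \tfrac{1}{2} V_s \right) ds + \int_0^t \sqrt{V_s}\, dW_{1s} \right),
\end{equation}
so $S$ is a measurable functional of $(V, W_1)$ and $r$. Consequently existence and uniqueness in law of the pair $(S, V)$ reduce to the corresponding statements for $V$ jointly with its driving noise. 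The genuine difficulty lies entirely in (\ref{vol}): the map $x \mapsto \sigma \sqrt{x}$ is only $1/2$-H\"older rather than Lipschitz, and the convolution with $K$ destroys both the Markov and the semimartingale structure, so neither Yamada--Watanabe pathwise uniqueness nor classical Markovian SDE theory applies.

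For existence I would regularize the diffusion coefficient, for instance replacing $\sqrt{V_s}$ by a bounded Lipschitz approximation of $\sqrt{(V_s)^+}$, and construct approximate solutions $V^n$ by a Picard or Euler scheme for which well-posedness is standard. The key estimates are uniform-in-$n$ moment bounds together with H\"older-continuity control of the stochastic convolution $t \mapsto \int_0^t K(t-s)\sigma\sqrt{V^n_s}\,dB_s$; here the two kernel regularity conditions in Assumption \ref{Assum:K}, namely $\int_0^h K(t)^2 dt = O(h^\gamma)$ and $\int_0^T (K(t+h)-K(t))^2 dt = O(h^\gamma)$, feed through the Burkholder--Davis--Gundy inequality and Kolmogorov's continuity criterion to yield tightness of $\{V^n\}$ in $C(\R_+, \R)$. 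Passing to a weak limit along a subsequence, via a Skorokhod representation and the stability of the convolution operator, then produces a continuous weak solution of (\ref{vol}).

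The next step is to show $V_t \geq 0$ for all $t$, which is what makes the square root in (\ref{vol}) and in the formula for $S$ legitimate in the first place. This is where the complete monotonicity in Assumption \ref{Assum:K} is essential: it guarantees a resolvent of the first kind $L$ with $K * L = \id$, and the associated inward-pointing structure of the mean-reverting drift at the boundary $\{V = 0\}$ can be exploited to verify that the nonnegative approximations do not develop a negative part in the limit. I expect this invariance of $\R_+$ to be one of the two delicate points, since it cannot be read off from the coefficients directly and genuinely requires the sign information carried by $K$.

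The main obstacle, however, is uniqueness in law, and here I would exploit the affine structure rather than any pathwise argument. The plan is to establish an exponential-affine transform formula: for a suitable class of test functions I would show that Fourier--Laplace functionals of the path of $V$, such as $\E\big[\exp(\int_0^T f_s V_s\, ds + \cdots)\big]$, are given explicitly as $\exp$ of a deterministic quantity built from the solution of a Riccati--Volterra equation of the type discussed in Section \ref{Sec:Model}. Concretely, for each admissible test function one constructs a candidate process $M_t$, of the forward-variance--based kind that reappears in Theorem \ref{Thm:M}, and verifies that it is a true martingale; its constant expectation then forces the functional to equal a number depending only on $V_0$ and the kernel. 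Since such functionals over a rich enough family of test functions separate finite-dimensional distributions, the law of $V$, and jointly that of $(V, B, W_1)$, is uniquely pinned down, giving uniqueness in law for $(S, V)$. The hardest part of the whole argument is the verification that the locally defined $M_t$ is a genuine uniformly integrable martingale rather than a strict local martingale, since the square-root volatility and the Volterra memory obstruct the usual boundedness and Novikov-type criteria; this verification is the analogue of the martingale property that must be established for the auxiliary process in Theorem \ref{Thm:M}.
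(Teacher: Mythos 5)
You should note first that the paper itself contains no proof of this statement: Theorem \ref{Thm:SVSol} is imported verbatim from \cite[Theorem 7.1]{abi2017affine}, and the paper's only original content is the accompanying remark that the change of drift under the physical measure (nonzero $r_t$ and risk premium $\theta V_t$) is immaterial because the affine structure is preserved and $S$ is determined by $V$. Measured against the cited proof, your outline reconstructs its architecture faithfully: the reduction of \eqref{stock} to a stochastic exponential given $(V,W_1)$; existence for \eqref{vol} by regularizing the square root and obtaining tightness from the two kernel conditions in Assumption \ref{Assum:K} through Burkholder--Davis--Gundy and Kolmogorov's criterion; invariance of $\R_+$ via the resolvent of the first kind supplied by complete monotonicity; and uniqueness in law via an exponential-affine transform built on Riccati--Volterra equations, with the true-martingale verification (the analogue of \cite[Lemma 7.3]{abi2017affine}, which proceeds by moment bounds rather than Novikov) correctly identified as the delicate point.

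There is, however, one genuine gap, in the uniqueness step. You establish (in outline) that Laplace functionals of the form $\E\big[\exp\big(\int_0^T f_s V_s\,ds + \cdots\big)\big]$ pin down the law of $V$, and then assert that ``jointly that of $(V,B,W_1)$'' is pinned down as well --- but no functional you construct sees $W_1$, and this is precisely what cannot be taken for granted here. Since pathwise uniqueness for \eqref{vol} is open, a weak solution $V$ need not be adapted to the filtration generated by $B$; writing $W_1 = \rho B + \sqrt{1-\rho^2}\,B^{\perp}$ with $B^{\perp}$ orthogonal to $B$, orthogonality does \emph{not} give independence of $B^{\perp}$ from the pair $(V,B)$, so two weak solutions could share the law of $V$ while differing in the joint law of $(V,W_1)$, hence in the law of $(S,V)$ --- exactly the object the theorem asserts is unique. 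The proof in \cite{abi2017affine} closes this by treating the \emph{pair} $(\log S, V)$ as an affine Volterra process and proving the exponential-affine transform formula for joint Fourier--Laplace functionals of both components, governed by a coupled Riccati--Volterra system; this determines the finite-dimensional distributions of $(S,V)$ directly. To repair your sketch you would either enlarge your family of test functionals to joint exponentials in $\log S$ and path functionals of $V$, or argue that $\int_0^{\cdot}\sqrt{V_s}\,dB_s$ is recoverable from $V$ via the first-kind resolvent together with an explicit conditional-independence argument for $B^{\perp}$ --- the latter being delicate on $\{V=0\}$ and, absent strong uniqueness, not obviously available.
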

\begin{remark}
	Our model (\ref{vol})-(\ref{stock}) is defined under the physical measure, whereas the option pricing model of  \cite[Equations (7.1)-(7.2)]{abi2017affine} is under a risk-neutral measure with a zero risk-free rate. However, the proofs are almost identical because the affine structure is maintained and $S$ is determined by $V$.
\end{remark}
\begin{remark}
	For strong uniqueness, we mention \cite[Proposition B.3]{abi2019multifactor} as a related result with kernel $K \in C^1([0, T], \R)$ and \cite[Proposition 8.1]{mytnik2015uniqueness} for certain Volterra integral equations with smooth kernels. However, the strong uniqueness of (\ref{vol})-(\ref{stock}) is left open for singular kernels. For weak solutions, it is free to construct the Brownian motion as needed. However, the MV objective only depends on the mathematical expectation for the distribution of the processes. In the sequel, we will only work with a version of the solution to (\ref{vol})-(\ref{stock}) and fix the solution $(S, V, W_1, W_2)$, as other solutions have the same law.
\end{remark}

The following condition enables us to verify the admissibility of the optimal strategy. To be more precise about the constant $a$, \eqref{Eq:Const_a} gives an explicit sufficient large value needed.
\begin{assumption}\label{Assum:V}
	$\E \Big[ \exp \big( a \int^T_0 V_s ds \big) \Big] < \infty$ for a large enough constant $a > 0$.
\end{assumption}

To verify that Assumption \ref{Assum:V} holds under reasonable conditions, we consider the Riccati-Volterra equation (\ref{Eq:g}) for $g(a, t)$ as follows:
\begin{equation}\label{Eq:g}
g(a, t) = \int^t_0 K(t-s) \big[ a - \kappa g(a,s) + \frac{\sigma^2}{2} g^2(a, s) \big] ds.
\end{equation}
The existence and uniqueness of the solution to  (\ref{Eq:g}) are given in Lemmas \ref{Lem:g} and \ref{Lem:gfractional}.

\begin{theorem}\label{Thm:ExpV}
	Suppose Assumption \ref{Assum:K} holds and the Riccati-Volterra equation (\ref{Eq:g}) has a unique continuous solution on $[0, T]$, then
	\begin{equation}
	\E \Big[ \exp \big( a \int^T_0 V_s ds \big) \Big] =  \exp\Big[ \kappa \phi \int^T_0 g(a, s) ds + V_0 \int^T_0 \big[ a - \kappa g(a, s) + \frac{\sigma^2}{2} g^2(a, s) \big] ds \Big] < \infty.
	\end{equation}
	Moreover, denote $L$ as the resolvent of the first kind to $K$, then
	\begin{equation}
	\E \Big[ \exp \big( a \int^T_0 V_s ds \big) \Big] = \exp \Big[ \kappa \phi \int^T_0 g(a, s) ds + V_0 \int^T_0 g(a, T-s) L(ds) \Big].
	\end{equation}
\end{theorem}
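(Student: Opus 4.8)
The formula is the Volterra analogue of the classical closed form for the moment generating function of integrated CIR variance, with the Riccati ODE for the exponent replaced by the Riccati--Volterra equation \eqref{Eq:g}. Writing $h(s):=a-\kappa g(a,s)+\tfrac{\sigma^2}{2}g^2(a,s)$ so that \eqref{Eq:g} reads $g(a,\cdot)=K*h$, the coefficient $\int_0^T h(s)\,ds$ of $V_0$ plays the role of the terminal Riccati value and $\kappa\phi\int_0^T g(a,s)\,ds$ the role of the auxiliary exponent. The equivalence of the two displayed forms is then purely deterministic, and I would dispose of it first: applying the resolvent of the first kind $L$ to $g(a,\cdot)=K*h$ and using $L*K=\id$ gives $L*g=\id*h$, whence $\int_0^T g(a,T-s)\,L(ds)=(L*g)(T)=\int_0^T h(s)\,ds$, which is exactly the second identity. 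So it suffices to prove the first formula and the finiteness.

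The plan for the first formula is to exhibit a nonnegative local martingale $M$ with $M_T=\exp\!\big(a\int_0^T V_s\,ds\big)$ and deterministic $M_0$ equal to the claimed right-hand side; then $M_0=\E[M_T]$ closes the argument. I would build $M_t=\exp\!\big(a\int_0^t V_s\,ds+Y_t\big)$ with $Y_t$ an explicit affine functional of the forward variance curve $\xi_t(u):=\E[V_u\mid\cF_t]$, $u\ge t$, weighted by the Riccati--Volterra solution $g$. Under Assumption \ref{Assum:K} the complete monotonicity of $K$ yields a Bernstein representation $K(t)=\int_0^\infty e^{-xt}\,\mu(dx)$ and hence a Markovian lift of $V$; equivalently, solving the linear Volterra equation obtained by conditioning \eqref{vol} gives the martingale representation $\xi_t(u)=\E[V_u]+\int_0^t \hat K(u-s)\sigma\sqrt{V_s}\,dB_s$, where $\hat K=K-R_\kappa*K$ is the resolvent-adjusted kernel of $\kappa K$. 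Either device turns the problem into a genuine (possibly infinite-dimensional) affine transform, and a completion-of-squares/It\^o computation shows that the finite-variation part of $\log M_t$ cancels exactly when $g(a,\cdot)$ solves \eqref{Eq:g}, leaving $dM_t=M_t\,g(a,T-t)\sigma\sqrt{V_t}\,dB_t$.

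The boundary data are then read off: $Y_T=0$ gives $M_T=\exp\!\big(a\int_0^T V_s\,ds\big)$, while evaluating $Y_0$ on the deterministic initial curve $\xi_0(u)=\E[V_u]$ and simplifying through the same resolvent identities ($L*K=\id$) produces $Y_0=\kappa\phi\int_0^T g(a,s)\,ds+V_0\int_0^T h(s)\,ds$, i.e.\ $M_0$ equals the right-hand side. The displayed dynamics identify $M$ as the Dol\'eans exponential $M=M_0\,\mathcal E\big(\int_0^\cdot g(a,T-s)\sigma\sqrt{V_s}\,dB_s\big)$, a nonnegative local martingale and hence a supermartingale; consequently $\E\!\big[\exp(a\int_0^T V_s\,ds)\big]=\E[M_T]\le M_0<\infty$, since $g$ is continuous on $[0,T]$ by hypothesis. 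This already delivers the asserted finiteness (the bound underlying Assumption \ref{Assum:V}) together with the inequality ``$\le$''.

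The reverse inequality requires $M$ to be a true martingale, and this I expect to be the main obstacle, since it is where the seemingly circular integrability of $\exp(a\int_0^T V_s\,ds)$ must be controlled. I would localize by stopping times $\tau_n\uparrow T$, so that $\E[M_{T\wedge\tau_n}]=M_0$, and then establish uniform integrability of $\{M_{T\wedge\tau_n}\}$, for instance via a Novikov-type bound which for the above stochastic exponential reduces to the finiteness of $\E\big[\exp\big(\tfrac{1}{2}\sigma^2\int_0^T g(a,T-s)^2 V_s\,ds\big)\big]$, an integrated-variance exponential moment with bounded coefficient. This moment is finite by the supermartingale estimate of the previous step applied at a suitably enlarged parameter, available because \eqref{Eq:g} is solvable on $[0,T]$ for large enough constants (Lemmas \ref{Lem:g}--\ref{Lem:gfractional}); passing $n\to\infty$ then yields $\E[M_T]=M_0$ and hence equality. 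The genuinely delicate points throughout are the rigor of the stochastic-Fubini/lift computation producing the drift cancellation in the non-Markovian setting, and the parameter bootstrapping underlying this uniform-integrability step.
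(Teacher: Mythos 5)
Your deterministic reduction of the second display to the first via $L*K=\id$ is exactly the paper's step, and your martingale is the paper's object as well: the paper reads off \cite[Theorem 4.3, Equations (4.4)--(4.6)]{abi2017affine} the identity $a\int_0^T V_s\,ds = Y_0 - \tfrac{\sigma^2}{2}\int_0^T g^2(a,T-s)V_s\,ds + \sigma\int_0^T g(a,T-s)\sqrt{V_s}\,dB_s$, i.e.\ $\exp\big(a\int_0^T V_s\,ds\big) = e^{Y_0}\,\mathcal{E}\big(\sigma\int_0^\cdot g(a,T-s)\sqrt{V_s}\,dB_s\big)_T$, which is precisely your relation $M_T = M_0\,\mathcal{E}(\cdot)_T$ with $M_0=e^{Y_0}$. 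So, up to re-deriving the exponential-affine transform instead of citing it, you follow the same route, and your supermartingale bound giving ``$\le$'' and finiteness is fine.

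The genuine gap is in your closing uniform-integrability step. You reduce to the Novikov-type condition $\E\big[\exp\big(\tfrac{\sigma^2}{2}\int_0^T g(a,T-s)^2 V_s\,ds\big)\big]<\infty$ and claim it holds because \eqref{Eq:g} is ``solvable on $[0,T]$ for large enough constants (Lemmas \ref{Lem:g}--\ref{Lem:gfractional})''. Those lemmas say the opposite: Lemma \ref{Lem:g} requires $\kappa^2-2a\sigma^2>0$ and Lemma \ref{Lem:gfractional} requires $a<a_0(T)$, i.e.\ they guarantee solvability only for sufficiently \emph{small} parameters; for large parameters the Riccati--Volterra solution can blow up before $T$ and the exponential moment of $\int_0^T V_s\,ds$ genuinely explodes (cf.\ the moment-explosion results of \cite{gerhold2018moment}, cited in the paper). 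The theorem's hypothesis grants solvability only at the given $a$, and the enlarged parameter $a'=\tfrac{\sigma^2}{2}\sup_{[0,T]}g(a,\cdot)^2$ need not satisfy it; worse, the bound you need is a statement of exactly the type being proved, so the bootstrap is circular unless $a'\le a$, which there is no reason to expect. The paper avoids Novikov entirely: it invokes \cite[Lemma 7.3]{abi2017affine}, which asserts that $\mathcal{E}\big(\sigma\int_0^\cdot \psi(s)\sqrt{V_s}\,dB_s\big)$ is a true martingale for any bounded deterministic integrand, with a proof that iterates conditional estimates over a sufficiently fine partition of $[0,T]$ and needs no global exponential moment of the integrated variance. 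Replacing your Novikov bootstrap by that lemma (or reproducing its small-interval argument) closes the proof; the rest of your proposal is sound.
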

\begin{proof}
	Note $g(a, t)$ in \eqref{Eq:g} corresponds to \cite[Equation (4.3)]{abi2017affine} with $u = 0$ and $f = a$. \cite[Theorem 4.3]{abi2017affine} shows the equivalence between \cite[Equation (4.4)]{abi2017affine} and  \cite[Equation (4.6)]{abi2017affine}. For $t = T$, the expressions in \cite[Equation (4.4)-(4.6)]{abi2017affine} indicate that
	\begin{equation}
	a \int^T_0 V_s ds  = Y_0 - \frac{\sigma^2}{2} \int^T_0 g^2(a, T-s) V_s ds + \sigma \int^T_0  g(a, T-s) \sqrt{V_s} dB_s,
	\end{equation}
	with
	\begin{equation}
	Y_0 = \kappa \phi \int^T_0 g(a, s) ds + V_0 \int^T_0 \big[ a - \kappa g(a, s) + \frac{\sigma^2}{2} g^2(a, s) \big] ds.
	\end{equation}
	
	As $g(a, \cdot)$ is continuous on $[0, T]$ and therefore bounded, $\exp\big( - \frac{\sigma^2}{2} \int^t_0 g^2(a, T-s) V_s ds + \sigma \int^t_0  g(a, T-s) \sqrt{V_s} dB_s\big)$ is a martingale by \cite[Lemma 7.3]{abi2017affine}. Therefore,
	\begin{equation}
	\E \Big[ \exp \big( a \int^T_0 V_s ds \big) \Big] = \exp(Y_0) = \exp\Big[ \kappa \phi \int^T_0 g(a, s) ds + V_0 \int^T_0 \big[ a - \kappa g(a, s) + \frac{\sigma^2}{2} g^2(a, s) \big] ds \Big].
	\end{equation}
	Note that $K * L = \id $ implies
	\begin{equation}
	\int^T_0 \big[ a - \kappa g(a, s) + \frac{\sigma^2}{2} g^2(a, s) \big] ds  =  \int^T_0 g(a, T-s) L(ds).
	\end{equation}
	The result follows.
\end{proof}
Theorem \ref{Thm:ExpV} recovers the same expression for $\E \Big[ \exp \big( a \int^T_0 V_s ds \big) \Big]$ in \cite[Theorem 3.2]{eleuch2018perfect}. We stress that the proof circumvents the use of the Hawkes processes. In addition, we mention \cite{gerhold2018moment}, which examines the moment explosions in the rough Heston model, as a related reference.

\section{Mean-variance portfolio selection}\label{Sec:MV}

Let $u_t \triangleq \sqrt{V_t} \pi_t$ be the investment strategy, where $\pi_t$ is the amount of wealth invested in the stock. Then wealth process $X_t$ satisfies 
\begin{equation}\label{Eq:wealth}
d X_t = \big(r_t X_t + \theta \sqrt{V_t} u_t \big) dt + u_t dW_{1t}, \quad X_0 = x_0 > 0.
\end{equation}

\begin{definition}
	An investment strategy $u(\cdot)$ is said to be admissible if 
	\begin{enumerate}[label={(\arabic*).}]
		\item $u(\cdot)$ is $\F$-adapted;
		\item $\E\Big[ \Big(\int^T_0 |\sqrt{V_t} u_t |dt \Big)^2 \Big] < \infty$ and $\E\Big[ \int^T_0 |u_t|^2 dt \Big] < \infty$; and
		\item the wealth process (\ref{Eq:wealth}) has a unique solution in the sense of \cite[Chapter 1, Definition 6.15]{yong1999book}, with $\p$-$\as$ continuous paths.
	\end{enumerate}
	The set of all of the admissible investment strategies is denoted as $\cU$.
\end{definition}

\begin{remark}
	In Condition (1), $\F$ is possibly strictly larger than the Brownian filtration of $W = (W_1, W_2)$, which means that extra information in addition to $W$ can be used to construct an admissible strategy. In general, $u$ can rely on a local $\p$-martingale that is strongly $\p$-orthogonal to $W$. See hedging strategy (3.6) in \cite[Theorem 3.1]{jeanblanc2012mean} for such examples. However, our optimal strategy $u^*$ turns out to only depend on the variance $V$ and Brownian motion $W$, as shown in Theorem \ref{Thm:Sol}.
\end{remark}

\begin{remark} We emphasize once again that the underlying probability space and Brownian motions are not parts of our control. Therefore, our formulation should still be referred to as a strong formulation. Readers may refer to \cite[Chapter 2, Section 4]{yong1999book} for discussions of the difference between strong and weak formulations of stochastic control problems.
\end{remark}

The MV portfolio selection in continuous-time is the following problem\footnote{There are several equivalent formulations.}.
\begin{equation}\label{Eq:obj}
\left\{\begin{array}{l}{ 
	\min _{ u(\cdot) \in \cU} J \left(x_{0} ; u(\cdot)\right) = \E \left[ (X_T - c)^2 \right]}, \\
\text{ subject to } \E[X_T] = c, \\
(X(\cdot), u(\cdot)) \text { satisfy (\ref{Eq:wealth})}.
\end{array}\right.
\end{equation}
The constant $c$ is the target wealth level at the terminal time $T$. We assume $c\geq x_0 e^{\int^T_0 r_s ds}$ following \cite{lim2002complete,lim2004incomplete,shen2015mean}. Otherwise, a trivial strategy that puts all of the wealth into the risk-free asset can dominate any other admissible strategy. The MV problem is said to be feasible for $c \geq x_0 e^{\int^T_0 r_s ds}$ if there exists a $u(\cdot) \in \cU$ that satisfies $\E[X_T] = c$. Note that $r_t > 0$ is deterministic and  $\E[\int^T_0 V_t dt] > 0$. It is then clear that the feasibility of our problem is guaranteed for any $c\geq x_0 e^{\int^T_0 r_s ds}$ by a slight modification to the proof in \cite[Propsition 6.1]{lim2004incomplete}.

As Problem (\ref{Eq:obj}) has a constraint, it is equivalent to the following max-min problem \cite{luenberger1968opt}.
\begin{equation}\label{Eq:maxminobj}
\left\{\begin{array}{l}{ 
	\max _{\eta \in \R } \min _{ u(\cdot) \in \cU} J\left(x_{0} ; u(\cdot)\right) = \E \left[(X_T - (c-\eta))^{2}\right] - \eta^{2}}, \\ 
(X(\cdot), u(\cdot)) \text { satisfy (\ref{Eq:wealth})}.
\end{array}\right.
\end{equation}
Let $\zeta = c - \eta$ and consider the inner Problem (\ref{Eq:innerobj}) of (\ref{Eq:maxminobj}) first.
\begin{equation}\label{Eq:innerobj}
\left\{\begin{array}{l}{ 
	\min _{ u(\cdot) \in \cU} J\left(x_{0} ; u(\cdot)\right) = \E \left[(X_T - \zeta )^{2}\right] - \eta^{2} }, \\ 
(X(\cdot), u(\cdot)) \text { satisfy (\ref{Eq:wealth})}.
\end{array}\right.
\end{equation}

\section{Optimal investment strategy}\label{Sec:Sol}
To solve Problem (\ref{Eq:innerobj}), we introduce a new probability measure $ \tilde \p$ by
\begin{equation}\label{Eq:tildeP}
\left. \frac{d \tilde \p}{d\p} \right|_{\cF_t} = \exp\Big( - 2 \theta^2 \int^t_0 V_s ds - 2 \theta \int^t_0 \sqrt{V_s} dW_{1s} \Big),
\end{equation}
where the stochastic exponential is a true martingale \cite[Lemma 7.3]{abi2017affine}. Then $\tilde W_{1t} \triangleq W_{1t}  + 2 \theta \int^t_0 \sqrt{V_s} ds$ is a new Brownian motion under $\tilde \p$. Hence,
\begin{equation}
V_{t}=V_{0}+ \int_{0}^{t} K(t-s)\left( \kappa \phi - \lambda V_{s}\right) d s + \int_{0}^{t} K(t-s) \sigma \sqrt{V_{s}} d \tilde B_{s},
\end{equation}
where $\lambda = \kappa + 2\theta \rho \sigma$ and $d \tilde B_s = \rho d\tilde W_{1s} + \sqrt{1- \rho^2} d W_{2s}$.

Denote $\tilde \E[\cdot]$ and $\tilde \E[ \cdot |\cF_t]$ as the $\tilde \p$-expectation and conditional $\tilde \p$-expectation,  respectively. The forward variance under $\tilde \p$ is the conditional $\tilde \p$-expected variance: $ \tildeE \left[V_{s} | \mathcal{F}_{t}\right] \triangleq \xi_{t}(s)$.  The following identity is proven in \cite[Propsition 3.2]{keller2018affine} by an application of \cite[Lemma 4.2]{abi2017affine}. 
\begin{equation}\label{Eq:xi}
\xi_{t}(s) = \tildeE \left[V_{s} | \mathcal{F}_{t}\right]=\xi_{0}(s)+\int_0^t \frac{1}{\lambda} R_{\lambda}(s-u) \sigma \sqrt{V_{u}} d \tilde B_{u},
\end{equation}
where
\begin{equation}
\xi_{0}(s) = \left(1-\int_{0}^{s} R_{\lambda}(u) d u\right) V_{0} + \frac{\kappa \phi}{\lambda} \int_{0}^{s} R_{\lambda}(u) du,
\end{equation}
and $R_\lambda$ is the resolvent of $\lambda K$ such that 
\begin{equation}\label{Eq:R_lambda}
\lambda K * R_\lambda = R_\lambda * ( \lambda K) = \lambda K - R_\lambda.
\end{equation}
If $\lambda = 0$, interpret $R_\lambda/\lambda = K$ and $R_\lambda = 0$.

Consider the stochastic process, 
\begin{equation}\label{Eq:M}
M_t = 2 \exp \Big[ \int^T_t \big(2 r_s - \theta^2 \xi_t(s) + \frac{(1-2\rho^2) \sigma^2}{2} \psi^2(T-s) \xi_t(s) \big) ds \Big],
\end{equation}
where
\begin{equation}\label{Eq:psi}
\psi(t) = \int^t_0 K(t - s) \big[ \frac{(1-2\rho^2) \sigma^2}{2} \psi^2(s)  - \lambda \psi(s) - \theta^2 \big] ds. 
\end{equation}
The existence and uniqueness of the solution to (\ref{Eq:psi}) are established in Lemma \ref{Lem:psi}.

The process $M$ is the key to applying the completion of squares technique in Theorem \ref{Thm:Sol}, inspired by \cite{lim2002complete,lim2004incomplete,shen2015mean}. Heuristically speaking, the non-Markovian and non-semimartingale characteristics of the Volterra Heston model are overcome by considering $M$. The construction of $M$ is based on the following observations. To make a completion of squares, we need an auxiliary process $M$ as an additional stochastic factor in a place consistent with previous studies of MV portfolios under semimartingales. The completion of squares procedure for proving Theorem \ref{Thm:Sol} indicates that $M$ should satisfy \eqref{Eq:dM}. We then link $M$ with the conditional expectation in \eqref{Eq:TransM} via a proper transformation. The exponential-affine transform formula in \cite[Equation (4.7)]{abi2017affine} is applied to obtain \eqref{Eq:M}.

\begin{theorem}\label{Thm:M}
	Assume Assumption \ref{Assum:K} holds and (\ref{Eq:psi}) has a unique continuous solution on $[0, T]$, then $M$ satisfies the following properties.
	\begin{enumerate}[label={(\arabic*).}]
		\item $M_t$ is essentially bounded and $0< M_t < 2e^{2\int^T_t r_s ds}$, $\p$-$\as$, $\forall \; t \in [0, T)$. $M_T = 2$.
		\item Apply It\^o's lemma to $M$ on $t$, then
		\begin{equation}\label{Eq:dM}
		dM_t = \big[ -2 r_t + \theta^2 V_t \big] M_t dt + \big[ 2 \theta \sqrt{V_t} U_{1t} + \frac{U^2_{1t}}{M_t} \big] dt + U_{1t} dW_{1t} + U_{2t} dW_{2t},
		\end{equation}
		where
		\begin{align}
		U_{1t} &= \rho \sigma M_t \sqrt{V_t} \psi(T-t), \label{Eq:U1simp}\\
		U_{2t} &= \sqrt{1-\rho^2} \sigma M_t \sqrt{V_t} \psi(T-t). \label{Eq:U2simp}
		\end{align}
		
		\item 
		\begin{equation}
		M_0 = 2 \exp \Big[  \int^T_0 2 r_s ds + \kappa \phi \int^T_0  \psi(s) ds + V_0 \int^T_0 \big[ \frac{(1-2\rho^2)\sigma^2}{2} \psi^2(s) - \lambda \psi(s) - \theta^2 \big] ds \Big].
		\end{equation}
		Furthermore, for fractional kernel $K(t) = \frac{t^{\alpha-1}}{\Gamma(\alpha)}$,  denote the fractional integral as $I^\alpha \psi(t) = K * \psi(t)$. Then 
		\begin{equation}\label{Eq:fracM0}
		M_0 = 2 \exp \Big[ \int^T_0 2 r_s ds + \kappa \phi I^1 \psi(T) + V_0 I^{1-\alpha}\psi (T) \Big].
		\end{equation}
		\item $\E\Big[ \big(\int^T_0 U^2_{it} dt \big)^{p/2} \Big] < \infty$ for $p \geq 1$ , $ i = 1, 2$.
	\end{enumerate}
\end{theorem}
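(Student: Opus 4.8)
The plan is to establish the four claims in order, writing throughout
\[
M_t = 2e^{N_t},\qquad N_t = \int_t^T\big(2r_s + \ell(T-s)\,\xi_t(s)\big)\,ds,\qquad \ell(v):=\tfrac{(1-2\rho^2)\sigma^2}{2}\psi^2(v)-\theta^2,
\]
and abbreviating $\nu:=\tfrac{(1-2\rho^2)\sigma^2}{2}$. For claim (1), since $\xi_t(s)=\tildeE[V_s\mid\cF_t]\ge 0$ and $r$ is deterministic and bounded, the lower bound $M_t>0$ and the essential boundedness follow once the upper bound is in hand. For the upper bound I would invoke the auxiliary result of Appendix \ref{App:Pos}, which I expect to supply the pointwise estimate $\nu\psi^2(t)\le\theta^2$ on $[0,T]$, i.e. $\ell\le 0$; together with $\xi_t(s)\ge 0$ this makes the non-$r$ part of $N_t$ nonpositive, so $M_t\le 2e^{2\int_t^T r_s\,ds}\le 2e^{2\int_0^T r_s\,ds}$. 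Since $\psi(0)=0$ forces $\ell(T-s)<0$ for $s$ near $T$, while $\xi_t(s)>0$ there, the inequality is strict for $t<T$; the endpoint $M_T=2$ is the empty-integral case.

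Claim (2) is the technical heart. I would first differentiate $N_t$ in $t$, where the integrand depends on $t$ both through the lower limit and through $\xi_t(s)$, whose $t$-dynamics are read from \eqref{Eq:xi} as $d_t\xi_t(s)=\tfrac1\lambda R_\lambda(s-t)\sigma\sqrt{V_t}\,d\tilde B_t$. A stochastic Fubini argument together with $\xi_t(t)=V_t$ gives
\[
dN_t=\big(-2r_t-\ell(T-t)V_t\big)\,dt+\Phi(t)\sigma\sqrt{V_t}\,d\tilde B_t,\qquad \Phi(t):=\tfrac1\lambda\int_t^T \ell(T-s)R_\lambda(s-t)\,ds.
\]
The crucial identity is $\Phi(t)=\psi(T-t)$: after the substitution $v=T-s$ this reads $\tfrac1\lambda(R_\lambda*\ell)(T-t)=\psi(T-t)$, i.e. $R_\lambda*\ell=\lambda\psi$. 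I would prove this by rewriting the Riccati--Volterra equation \eqref{Eq:psi} as $\psi+\lambda K*\psi=K*\ell$, then checking that $\lambda\psi=R_\lambda*\ell$ solves it using the resolvent relation \eqref{Eq:R_lambda} in the form $K*R_\lambda=K-\tfrac1\lambda R_\lambda$, and concluding by uniqueness for the linear Volterra equation. Applying It\^o's lemma to $M_t=2e^{N_t}$ and rewriting $\tilde B$ in terms of $W_1,W_2$ via $\tilde W_{1t}=W_{1t}+2\theta\int_0^t\sqrt{V_s}\,ds$ produces the diffusion coefficients in \eqref{Eq:U1simp}--\eqref{Eq:U2simp}; the measure-change drift $2\rho\theta\sigma\psi(T-t)V_t$ and the It\^o correction $\tfrac12\sigma^2\psi^2(T-t)V_t$ combine with $-\ell(T-t)V_t$, and the algebraic identity $\rho^2\sigma^2=\tfrac12\sigma^2-\nu$ is exactly what collapses these into $\theta^2V_tM_t+2\theta\sqrt{V_t}U_{1t}+U_{1t}^2/M_t$, matching \eqref{Eq:dM}. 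The hard part will be the rigorous stochastic Fubini/Leibniz step for the moving lower limit combined with the stochastic integrand, and carrying the singular case $\lambda=0$ (where $R_\lambda/\lambda=K$ and $\Phi=K*\ell=\psi(T-\cdot)$ directly).

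For claim (3) I would set $t=0$, where $\xi_0$ is deterministic, and reduce the assertion to the identity $\int_0^T \ell(T-s)\xi_0(s)\,ds=\kappa\phi\int_0^T\psi(s)\,ds+V_0\int_0^T\big(\ell(s)-\lambda\psi(s)\big)\,ds$. Substituting $\xi_0(s)=V_0+\tfrac{\kappa\phi-\lambda V_0}{\lambda}\int_0^s R_\lambda(u)\,du$ and swapping the order of integration, the double integral reduces via $R_\lambda*\ell=\lambda\psi$ to $\lambda\int_0^T\psi$, which yields the stated $M_0$. For the fractional kernel I would use $K*=I^\alpha$, rewrite \eqref{Eq:psi} as $\ell-\lambda\psi=I^{-\alpha}\psi$ so that $\int_0^T(\ell-\lambda\psi)\,ds=I^1I^{-\alpha}\psi(T)=I^{1-\alpha}\psi(T)$ by the semigroup property, and identify $\int_0^T\psi\,ds=I^1\psi(T)$, giving \eqref{Eq:fracM0}.

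Finally, claim (4) is routine given the above. From \eqref{Eq:U1simp}--\eqref{Eq:U2simp}, the boundedness of $M$ from part (1), and the continuity (hence boundedness) of $\psi$ on $[0,T]$, one has $|U_{it}|^2\le C\,V_t$ for $i=1,2$, whence $\big(\int_0^T U_{it}^2\,dt\big)^{p/2}\le C'\big(\int_0^T V_t\,dt\big)^{p/2}$. Dominating the polynomial $x^{p/2}$ by $C_p\,e^{ax}$ and invoking the exponential moment in Assumption \ref{Assum:V} then gives $\E\big[(\int_0^T U_{it}^2\,dt)^{p/2}\big]<\infty$ for every $p\ge 1$.
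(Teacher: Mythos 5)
Parts (2) and (3) of your proposal are essentially sound: the derivation of $dZ_t$ via stochastic Fubini, the key identity $\tfrac{1}{\lambda}R_\lambda*\ell=\psi$, and the final drift algebra all match the paper, and your route to the identity (recast \eqref{Eq:psi} as the linear Volterra equation $\psi+\lambda K*\psi=K*\ell$ with $\ell$ frozen, verify $\lambda\psi=R_\lambda*\ell$ via \eqref{Eq:R_lambda}, conclude by linear uniqueness) is a legitimate variant of the paper's direct convolution computation, in the spirit of \cite[Lemma 4.4]{abi2017affine} which the paper itself cites. Likewise your fractional-semigroup argument $\int_0^T(\ell-\lambda\psi)\,ds=I^1 I^{-\alpha}\psi(T)=I^{1-\alpha}\psi(T)$ is a clean substitute for the paper's integration by parts. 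However, your proof of Property (1) has a genuine gap. The upper bound rests on the pointwise estimate $\ell(v)=\tfrac{(1-2\rho^2)\sigma^2}{2}\psi^2(v)-\theta^2\le 0$, which you attribute to Appendix \ref{App:Pos}; but Lemma \ref{Lem:Positive} only asserts $\int_t^T\xi_t(s)\,ds>0$, $\p$-$\as$, and says nothing about $\psi$. No such bound on $\psi$ is proved anywhere in the paper, and it can actually fail under the theorem's hypotheses (which impose no sign condition on $1-2\rho^2$ or on $\lambda=\kappa+2\theta\rho\sigma$): for $1-2\rho^2>0$, writing $\tilde\psi=(1-2\rho^2)\psi$, your claim reads $\tfrac{\sigma^2}{2}\tilde\psi^2\le(1-2\rho^2)\theta^2$, yet $\tilde\psi$ tends (e.g.\ for a constant-type kernel with $T$ large) toward the negative root $w_-$ of $\tfrac{\sigma^2}{2}w^2-\lambda w-(1-2\rho^2)\theta^2$, where $\tfrac{\sigma^2}{2}w_-^2=\lambda w_-+(1-2\rho^2)\theta^2>(1-2\rho^2)\theta^2$ whenever $\lambda<0$, which is permitted. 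The paper's proof deliberately avoids pointwise control of $\ell$: applying \cite[Theorem 4.3]{abi2017affine} to $\tilde\psi$ yields the transform identity \eqref{Eq:TransM},
\begin{equation*}
M_t^{1-2\rho^2}=2^{1-2\rho^2}e^{2(1-2\rho^2)\int_t^T r_s\,ds}\,\tildeE\Big[\exp\Big(-\theta^2(1-2\rho^2)\int_t^T V_s\,ds\Big)\Big|\cF_t\Big],
\end{equation*}
and the strict bound then follows from $\int_t^T V_s\,ds>0$, $\p$-$\as$ (this is where the Appendix \ref{App:Pos} positivity actually enters), controlling only the sign of the whole integral $\int_t^T\ell(T-s)\xi_t(s)\,ds$, not of the integrand. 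You need this exponential-affine detour (or an equivalent argument); as written, your step would fail.

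A secondary but real flaw is in Property (4): you invoke Assumption \ref{Assum:V}, which is \emph{not} among the hypotheses of Theorem \ref{Thm:M} — it is assumed only later, in Theorem \ref{Thm:X^*u^*}, with the explicit constant \eqref{Eq:Const_a}. Your reduction $|U_{it}|^2\le C\,V_t$ is correct, but the finish should use only polynomial moments: by Jensen/H\"older, $\E\big[\big(\int_0^T V_t\,dt\big)^{p/2}\big]\le C\sup_{t\in[0,T]}\E\big[V_t^{p/2}\big]<\infty$ by \cite[Lemma 3.1]{abi2017affine}, which holds under Assumption \ref{Assum:K} alone; this is exactly how the paper closes the argument without exponential moments.
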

\begin{proof}
	{\bf Property (1)}.
	
	It is straightforward to see that $M_t > 0$ in (\ref{Eq:M}). As for the upper bound, if $1-2\rho^2 = 0$, note $\int^T_t \xi_t(s) ds > 0$, $\p$-$\as$ by Lemma \ref{Lem:Positive}, then $M_t < 2e^{2\int^T_t r_s ds}$, $\p$-$\as$. If $1-2\rho^2 \neq 0$, we claim 
	\begin{equation}\label{Eq:TransM}
	M^{1-2\rho^2}_t = 2^{1-2\rho^2} \exp\big[2 (1-2\rho^2) \int^T_t r_s ds\big] \tildeE \Big[ \exp \big(- \theta^2(1-2\rho^2) \int^T_t V_s ds\big) \Big| \cF_t \Big].
	\end{equation}
	It is equivalent to show that
	\begin{align}\label{Eq:Mtrans}
	& \tildeE \Big[ \exp \big(- \theta^2(1-2\rho^2) \int^T_t V_s ds\big) \Big| \cF_t \Big] \\
	& =  \exp \Big[ \int^T_t \big( - (1-2\rho^2) \theta^2 \xi_t(s) + \frac{(1-2\rho^2)^2 \sigma^2}{2} \psi^2(T-s) \xi_t(s) \big) ds \Big]. \nonumber
	\end{align}
	Denote $ \tilde \psi = (1 - 2 \rho^2) \psi$. Then $\tilde \psi$ satisfies
	\begin{equation}
	\tilde \psi = K * \big( \frac{\sigma^2}{2} \tilde \psi^2 - \lambda \tilde \psi - (1 - 2\rho^2) \theta^2 \big).
	\end{equation} 
	Therefore, (\ref{Eq:Mtrans}) holds for all $t \in [0, T]$ by \cite[Theorem 4.3]{abi2017affine} applying to $\tilde \psi$. The martingale assumption in \cite[Theorem 4.3]{abi2017affine} is verified by \cite[Lemma 7.3]{abi2017affine}.
	
	If $ 1 - 2 \rho^2 > 0$, then $\tildeE \Big[ \exp \big(- \theta^2(1-2\rho^2) \int^T_t V_s ds\big) \Big| \cF_t \Big] < 1$, $\p$-$\as$, which implies $M_t < 2e^{2\int^T_t r_s ds}$, $\p$-$\as$. $ 1 - 2 \rho^2 < 0$ can be discussed similarly. Property (1) is proved.
	
	\noindent {\bf Property (2).}
	
	Denote $ M_t = 2 e^{Z_t} $ in (\ref{Eq:M}) with proper $Z_t$. We first derive the equation for $dZ_t$. From (\ref{Eq:xi}), apply It\^o's lemma to $\xi_t(s)$ on time $t$ and get
	\begin{equation}
	d \xi_t(s) = \frac{1}{\lambda} R_{\lambda}(s-t) \sigma \sqrt{V_t} d \tilde B_t.
	\end{equation}
	
	Then
	\begin{align*}
	d Z_t  = &\big[ - 2 r_t + \theta^2 V_t - \frac{(1-2\rho^2) \sigma^2}{2} \psi^2(T-t) V_t \big] dt \\
	&  - \theta^2 \int^T_t   \frac{1}{\lambda} R_{\lambda}(s-t) \sigma \sqrt{V_t} d \tilde B_t ds +  \frac{(1-2\rho^2) \sigma^2}{2} \int^T_t  \psi^2(T-s) \frac{1}{\lambda} R_{\lambda}(s-t) \sigma \sqrt{V_t} d \tilde B_t ds \\
	= &\big[ - 2 r_t + \theta^2 V_t - \frac{(1-2\rho^2) \sigma^2}{2} \psi^2(T-t) V_t \big] dt \\
	& - \theta^2 \int^T_t \sigma \frac{1}{\lambda} R_{\lambda}(s-t) ds \sqrt{V_t} d \tilde B_t +  \frac{(1-2\rho^2) \sigma^2}{2} \int^T_t  \sigma \psi^2(T-s) \frac{1}{\lambda} R_{\lambda}(s-t) ds \sqrt{V_t} d \tilde B_t \\
	= &\big[ - 2 r_t + \theta^2 V_t - \frac{(1-2\rho^2) \sigma^2}{2} \psi^2(T-t) V_t \big] dt \\
	& + d \tilde B_t \cdot \sigma \sqrt{V_t}   \int^T_t \Big[ \frac{(1-2\rho^2) \sigma^2}{2} \psi^2(T-s) - \theta^2  \Big] \frac{1}{\lambda} R_{\lambda}(s-t) ds .
	\end{align*}
	The second equality is guaranteed by the stochastic Fubini theorem \cite{veraar2012fubini}.
	
	We claim the following representation for (\ref{Eq:U1simp})-(\ref{Eq:U2simp}).
	\begin{align}
	U_{1t} & = \sigma \rho M_t \sqrt{V_t}  \int^T_t \Big[ \frac{(1-2\rho^2) \sigma^2}{2} \psi^2(T-s) - \theta^2  \Big] \frac{1}{\lambda} R_{\lambda}(s-t) ds, \label{Eq:U1}\\
	U_{2t} & = \sigma \sqrt{1 - \rho^2} M_t \sqrt{V_t}  \int^T_t \Big[ \frac{(1-2\rho^2) \sigma^2}{2} \psi^2(T-s) - \theta^2  \Big] \frac{1}{\lambda} R_{\lambda}(s-t) ds. \label{Eq:U2}
	\end{align}
	Indeed, we only have to show
	\begin{equation}\label{Eq:Uequivalent}
	\int^T_t \Big[ \frac{(1-2\rho^2) \sigma^2}{2} \psi^2(T-s) - \theta^2  \Big] \frac{1}{\lambda} R_{\lambda}(s-t) ds = \psi (T-t). 
	\end{equation}
	Although one can verify \eqref{Eq:Uequivalent} in the same fashion as \cite[Lemma 4.4]{abi2017affine}, we still detail the derivation here for a self-contained paper. As 
	\begin{align*}
	& \int^T_t \Big[ \frac{(1-2\rho^2) \sigma^2}{2} \psi^2(T-s) - \theta^2  \Big] \frac{1}{\lambda} R_{\lambda}(s-t) ds \\
	= & \int^{T - t}_0 \Big[ \frac{(1-2\rho^2) \sigma^2}{2} \psi^2(T- t - s) - \theta^2  \Big] \frac{1}{\lambda} R_{\lambda}(s) ds \\
	= & \big[ \frac{(1-2\rho^2) \sigma^2}{2} \psi^2 - \theta^2  \big] * \frac{1}{\lambda} R_{\lambda}(T - t),
	\end{align*}
	we have
	\begin{align*}
	& \int^T_t \Big[ \frac{(1-2\rho^2) \sigma^2}{2} \psi^2(T-s) - \theta^2  \Big] \frac{1}{\lambda} R_{\lambda}(s-t) ds - \psi (T-t)\\
	= & \big[ \frac{(1-2\rho^2) \sigma^2}{2} \psi^2 - \theta^2  \big] * \frac{1}{\lambda} R_{\lambda}(T - t) - K* \big[ \frac{(1-2\rho^2) \sigma^2}{2} \psi^2 - \lambda \psi - \theta^2  \big](T-t) \\
	= & \big[ \frac{(1-2\rho^2) \sigma^2}{2} \psi^2 - \theta^2  \big] * \big[ \frac{1}{\lambda} R_{\lambda} - K \big](T - t) + \lambda K*\psi(T-t)\\
	= & - R_\lambda*K*\big[ \frac{(1-2\rho^2) \sigma^2}{2} \psi^2 - \theta^2  \big](T-t) + \lambda K*\psi(T-t) .
	\end{align*}
	The application of (\ref{Eq:psi}) leads to
	\begin{equation}
	R_\lambda * \psi = R_\lambda*K*\big[ \frac{(1-2\rho^2) \sigma^2}{2} \psi^2 - \lambda \psi - \theta^2  \big].
	\end{equation}
	Consequently,
	\begin{align*}
	& - R_\lambda*K*\big[ \frac{(1-2\rho^2) \sigma^2}{2} \psi^2 - \theta^2  \big](T-t) + \lambda K*\psi(T-t) \\
	& = \big[ \lambda K - R_\lambda - \lambda K* R_\lambda \big]* \psi(T-t) = 0.
	\end{align*}
	This shows that
	\begin{align}
	dZ_t = &\big[ - 2 r_t + \theta^2 V_t - \frac{(1-2\rho^2) \sigma^2}{2} \psi^2(T-t) V_t \big] dt + \frac{U_{1t}}{M_t} d\tilde W_{1t} + \frac{U_{2t}}{M_t} d W_{2t}.
	\end{align}
	Applying It\^o's lemma to $M_t= 2e^{Z_t}$ with function $f(z) = 2 e^z$ yields
	\begin{align*}
	d M_t = & M_t d Z_t + \frac{1}{2} M_t dZ_t dZ_t \\
	= & M_t \big[ - 2 r_t + \theta^2 V_t - \frac{(1-2\rho^2) \sigma^2}{2} \psi^2(T-t) V_t \big] dt + \frac{U^2_{1t} + U^2_{2t}}{2 M_t} dt \\
	& + U_{1t} d\tilde W_{1t} + U_{2t} d W_{2t} \\
	= & \big[ -2 r_t + \theta^2 V_t \big] M_t dt + \big[ 2 \theta \sqrt{V_t} U_{1t} + \frac{U^2_{1t}}{M_t} \big] dt + U_{1t} dW_{1t} + U_{2t} dW_{2t}.
	\end{align*}
	
	\noindent {\bf Property (3)}.
	
	The proof for the property of $Y_t$ in \cite[Theorem 4.3]{abi2017affine} indicates
	\begin{align*}
	& \int^T_0 \big[ - \theta^2 \xi_0(s) + \frac{(1-2\rho^2) \sigma^2}{2} \psi^2(T-s) \xi_0(s) \big] ds \\
	& = \int^T_0 \big[ - \theta^2 V_0 + (\kappa \phi - \lambda V_0) \psi(s) + \frac{(1-2\rho^2)\sigma^2}{2} \psi^2(s) V_0 \big] ds. 
	\end{align*}
	
	Under the fractional kernel, we show by integration by parts that
	\begin{equation}
	\int^T_0 \big[ - \theta^2 - \lambda \psi(s) + \frac{(1-2\rho^2)\sigma^2}{2} \psi^2(s) \big] ds = I^{1 - \alpha} \psi(T).
	\end{equation}
	This gives the desired result.
	
	\noindent	{\bf Property (4)}.
	
	It is sufficient to consider the case with $p > 2$. As $\psi(t)$ is continuous on $[0, T]$ and $M_t$ is essentially bounded, 
	\begin{align*}
	\E\Big[ \big(\int^T_0 U^2_{it} dt \big)^{p/2} \Big] \leq C \E\Big[ \big(\int^T_0 V_t dt \big)^{p/2} \Big] \leq C \int^T_0 \E \big[ V^{p/2}_t \big] dt  \leq C\sup_{t \in [0, T]} \E\big[  V^{p/2}_t \big] < \infty.
	\end{align*}	
	The last term is finite by \cite[Lemma 3.1]{abi2017affine}.
\end{proof}

We first propose a candidate optimal control $u^*$. In the following theorem, we prove the admissibility of $u^*$ and the integrability of the corresponding $X^*$. Theorem \ref{Thm:X^*u^*} is in the spirit of \cite{lim2002complete,lim2004incomplete,shen2015mean}. Finally, we prove the optimality of $u^*$ in \eqref{Eq:u*} by Theorem \ref{Thm:Sol}.

\begin{theorem}\label{Thm:X^*u^*}
	Assume Assumption \ref{Assum:K} holds and (\ref{Eq:psi}) has a unique continuous solution on $[0, T]$. Denote $A_t \triangleq \theta + \rho \sigma \psi(T-t)$. Suppose Assumption \ref{Assum:V} holds with constant $a$ given the following:
	\begin{equation}\label{Eq:Const_a}
	a = \max \Big\{ 2 p |\theta| \sup_{t \in [0, T]} | A_t |, (8p^2 - 2p) \sup_{t \in [0, T]}  A^2_t \Big\}, \quad \text{for certain } p > 2. 
	\end{equation}
	Consider 
	\begin{equation}\label{Eq:u*}
	u^*(t) = (\theta + \rho \sigma \psi (T-t)) \sqrt{V_t} (\zeta^* e^{-\int^T_t r_s ds} - X^*_t),
	\end{equation}
	where $X^*_t$ is the wealth process under $u^*$ and $\zeta^* = c - \eta^*$ with
	\begin{equation}\label{Eq:eta*}
	\eta^* = \frac{e^{-\int^T_0 r_s ds}M_0 x_0 - e^{-\int^T_0 2 r_s ds} M_0 c}{2 - e^{-\int^T_0 2 r_s ds} M_0}.
	\end{equation}
	
	$u^*(\cdot)$ in (\ref{Eq:u*}) is admissible and $X^*$ under $u^*(\cdot)$ satisfies 
	\begin{equation}\label{Eq:X*Integral}
	\E \Big[ \sup_{ t \in [0, T]} | X^*_t |^p \Big] < \infty,
	\end{equation}
	for $p \geq 1$. Moreover,
	\begin{equation}\label{Eq:X*bound}
	\zeta^* e^{-\int^T_t r_s ds} - X^*_t \geq 0, \quad  \text{$\p$-$\as$}, \forall \; t \in [0, T].
	\end{equation}
\end{theorem}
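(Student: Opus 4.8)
The plan is to reverse the roles of control and state: substitute the proposed feedback $u^*$ into the wealth dynamics (\ref{Eq:wealth}), show that the resulting closed-loop equation has an explicit solution, and read off all four assertions from that closed form. First I would introduce the auxiliary process $Y_t := \zeta^* e^{-\int_t^T r_s ds} - X^*_t$, so that $u^*_t = A_t\sqrt{V_t}\,Y_t$. Plugging $u^*$ into (\ref{Eq:wealth}) and differentiating the deterministic factor $\zeta^* e^{-\int_t^T r_s ds}$ (whose drift is $r_t\,\zeta^* e^{-\int_t^T r_s ds}$), the $r_tX^*_t$ terms cancel and $Y$ is seen to satisfy the linear SDE
\begin{equation}
dY_t = (r_t - \theta A_t V_t)\,Y_t\,dt - A_t\sqrt{V_t}\,Y_t\,dW_{1t},\qquad Y_0 = \zeta^* e^{-\int_0^T r_s ds} - x_0,
\end{equation}
whose unique solution is the stochastic exponential
\begin{equation}\label{prop:Yexp}
Y_t = Y_0\exp\Big[\int_0^t\big(r_s - \theta A_s V_s - \tfrac12 A_s^2 V_s\big)ds - \int_0^t A_s\sqrt{V_s}\,dW_{1s}\Big].
\end{equation}
Since $\psi$ is continuous on $[0,T]$, $A$ is bounded, and $V$ has continuous paths with the moments of \cite[Lemma 3.1]{abi2017affine}, this closed form is $\F$-adapted with $\p$-$\as$ continuous paths; setting $X^*_t = \zeta^* e^{-\int_t^T r_s ds} - Y_t$ verifies that $(X^*,u^*)$ solves (\ref{Eq:wealth}) uniquely, which is admissibility Condition~(3).

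The sign claim (\ref{Eq:X*bound}) is then immediate once $Y_0 \ge 0$, because the exponential in (\ref{prop:Yexp}) is strictly positive and $Y$ is path-continuous. Writing $D = e^{-\int_0^T r_s ds}$ and inserting $\eta^*$ from (\ref{Eq:eta*}), a short algebraic simplification collapses the initial value to
\begin{equation}
Y_0 = \frac{2\big(cD - x_0\big)}{\,2 - D^2 M_0\,}.
\end{equation}
The numerator is nonnegative by the standing feasibility hypothesis $c \ge x_0 e^{\int_0^T r_s ds}$, while the denominator is strictly positive because Property~(1) of Theorem~\ref{Thm:M} gives $M_0 < 2e^{2\int_0^T r_s ds} = 2D^{-2}$. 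Hence $Y_0 \ge 0$ and (\ref{Eq:X*bound}) follows.

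The crux is the moment bound (\ref{Eq:X*Integral}); it suffices to prove it for the $p>2$ fixed in (\ref{Eq:Const_a}), as smaller exponents follow by Jensen's inequality. Since $\zeta^* e^{-\int_t^T r_s ds}$ is bounded and deterministic, I reduce to controlling $\E[\sup_{t\le T}Y_t^p]$. Write $I_t=\int_0^t A_s\sqrt{V_s}\,dW_{1s}$, so $\langle I\rangle_t=\int_0^t A_s^2 V_s\,ds$. From (\ref{prop:Yexp}), dropping $-\tfrac12A_s^2V_s\le 0$ and bounding $-\theta A_sV_s\le|\theta|\sup_t|A_t|\,V_s$, I factor $\sup_t Y_t^p$ as $Y_0^p e^{p\int_0^T r_s ds}$ times $\exp\big(p|\theta|\sup_t|A_t|\int_0^T V_s ds\big)$ times $\sup_t D_t$, where $D_t := \exp(-pI_t-\tfrac p2\langle I\rangle_t)$ is a nonnegative submartingale (a martingale times the nondecreasing factor $\exp(\tfrac{p^2-p}2\langle I\rangle_t)$). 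A Cauchy--Schwarz split isolates $\big(\E[\exp(2p|\theta|\sup_t|A_t|\int_0^T V_s ds)]\big)^{1/2}$, finite by Assumption~\ref{Assum:V} once $a\ge 2p|\theta|\sup_t|A_t|$, which is the first term of (\ref{Eq:Const_a}). For $\E[\sup_t D_t^2]$, Doob's $L^2$ maximal inequality gives $\E[\sup_t D_t^2]\le 4\,\E[D_T^2]$, and $D_T^2 = \exp(-2pI_T - p\langle I\rangle_T)$ rewrites exactly as $\mathcal{E}(-4pA\sqrt V\cdot W_1)_T^{1/2}\,\exp\big((4p^2-p)\langle I\rangle_T\big)$, where $\mathcal{E}$ denotes the Dol\'eans--Dade exponential. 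A final Cauchy--Schwarz, using the true-martingale identity $\E[\mathcal{E}(-4pA\sqrt V\cdot W_1)_T]=1$ from \cite[Lemma 7.3]{abi2017affine}, bounds this by $\big(\E[\exp((8p^2-2p)\langle I\rangle_T)]\big)^{1/2}$; since $\langle I\rangle_T\le\sup_t A_t^2\int_0^T V_s ds$, this is finite by Assumption~\ref{Assum:V} once $a\ge(8p^2-2p)\sup_t A_t^2$, precisely the second term of (\ref{Eq:Const_a}).

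I expect this last estimate to be the main obstacle. Applied naively, iterating Cauchy--Schwarz on the squared exponential martingale does not terminate: because the quadratic variation $\langle I\rangle_T=\int_0^T A_s^2V_s\,ds$ is itself unbounded, each squaring regenerates an exponential-martingale factor of the same shape. The decisive point is to keep the $-\tfrac p2\langle I\rangle_t$ compensator attached so that $D$ is a genuine submartingale, apply Doob to $D$ directly (this is where the $-2p$ in $8p^2-2p$ originates, via the $-p\langle I\rangle_T$ term in $D_T^2$), and arrange the terminal split so that the exponential local martingale $\mathcal{E}(-4pA\sqrt V\cdot W_1)$ appears only to the first power, where \cite[Lemma 7.3]{abi2017affine} forces its expectation to equal $1$ and leaves a single $\exp\big(\text{const}\cdot\int_0^T V_s ds\big)$ term for Assumption~\ref{Assum:V}. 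Once (\ref{Eq:X*Integral}) is in hand, admissibility Conditions~(1)--(2) are routine: $u^*_t=A_t\sqrt{V_t}Y_t$ is $\F$-adapted from (\ref{prop:Yexp}), and the two integrability requirements follow from (\ref{Eq:X*Integral}) with H\"older's inequality and the variance moments of \cite[Lemma 3.1]{abi2017affine}, for instance $\E\big[\int_0^T A_t^2V_tY_t^2\,dt\big]\le C\big(\E[\sup_t Y_t^4]\big)^{1/2}\big(\E[(\int_0^T V_t\,dt)^2]\big)^{1/2}<\infty$.
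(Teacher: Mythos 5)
Your proposal is correct and is essentially the paper's own argument: your closed-loop stochastic exponential for $Y_t=\zeta^*e^{-\int_t^T r_s\,ds}-X^*_t$ is exactly the paper's $Y_t/M_t=(Y_0/M_0)\Phi(t)$ (you merely bypass the paper's intermediate two-Brownian auxiliary process and $M$), your computation $Y_0=2(cD-x_0)/(2-D^2M_0)$ with $M_0<2e^{2\int_0^T r_s\,ds}$ from Theorem \ref{Thm:M} makes explicit a sign step the paper leaves implicit, and your Doob--Cauchy--Schwarz chain (Doob in $L^2$ on the submartingale $D_t=\mathcal{E}(-A\sqrt{V}\cdot W_1)_t^p$, then a terminal split leaving $\mathcal{E}(-4pA\sqrt{V}\cdot W_1)$ to the first power so that \cite[Lemma 7.3]{abi2017affine} and Assumption \ref{Assum:V} apply) reproduces the paper's estimates with the identical constants $2p|\theta|\sup_t|A_t|$ and $(8p^2-2p)\sup_t A_t^2$ in \eqref{Eq:Const_a}. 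One small adjustment: your illustrative admissibility bound $\big(\E[\sup_t Y_t^4]\big)^{1/2}$ presumes $p\ge 4$, whereas \eqref{Eq:Const_a} only fixes some $p>2$, so you should instead apply H\"older with conjugate exponents $\hat p=p/2$ and $\hat q=p/(p-2)$ together with \cite[Lemma 3.1]{abi2017affine}, which is precisely what the paper does.
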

\begin{proof}
	The wealth process under $u^*$ is given by
	\begin{equation}
	\left\{
	\begin{array}{rcl}
	dX^*_t &=& \big[ r_t X^*_t + \theta A_t V_t (\zeta^* e^{-\int^T_t r_s ds} - X^*_t) \big] dt + A_t \sqrt{V_t} (\zeta^* e^{-\int^T_t r_s ds} - X^*_t) dW_{1t},  \\
	X^*_0 &=& x_0.
	\end{array}\right.
	\end{equation}
	
	To find a solution to $X^*$, define $Y_t$ satisfying 
	\begin{equation}
	\left\{
	\begin{array}{rcl}
	dY_t &=& - r_t Y_t dt - \theta \sqrt{V_t} Y_t dW_{1t}  + Y_t \sqrt{1 - \rho^2} \sigma \psi(T-t) \sqrt{V_t} dW_{2t}, \\
	Y_0 &=& M_0 (\zeta^* e^{-\int^T_0 r_s ds}   -x_0 ).
	\end{array}\right.
	\end{equation}
	The unique solution of $Y_t$ is given by
	\begin{align*}
	Y_t =& Y_0 \exp \Big[ - \frac{1}{2} \int^t_0 \big( 2r_s + \theta^2 V_s + (1 - \rho^2) \sigma^2 \psi^2(T-s) V_s \big) ds - \int^t_0 \theta \sqrt{V_s} dW_{1s}  \\
	& \qquad \quad +  \int^t_0 \sqrt{1 - \rho^2} \sigma \psi(T-s) \sqrt{V_s} dW_{2s} \Big].
	\end{align*}
	It\^o's lemma yields
	\begin{equation}\label{Eq:X*2Y}
	X^*_t = \zeta^* e^{-\int^T_t r_s ds} - \frac{Y_t}{M_t}
	\end{equation}
	as the unique solution of the wealth process. Indeed,
	\begin{equation}
	d \frac{Y_t}{M_t} = \Big[ r_t \frac{Y_t}{M_t} - \theta A_t V_t \frac{Y_t}{M_t} \Big] dt - A_t \sqrt{V_t} \frac{Y_t}{M_t} dW_{1t}.
	\end{equation}
	The existence of $u^*$ is also guaranteed by the existence of the solution $X^*$. Furthermore, $\frac{Y_t}{M_t} = \frac{Y_0}{M_0} \Phi(t)$, where
	\begin{align*}
	\Phi(t) \triangleq & \exp \Big[ \int^t_0 \big[ r_s - \big( \theta A_s +  \frac{A^2_s}{2} \big) V_s \big] ds - \int^t_0 A_s \sqrt{V_s} dW_{1s} \Big].
	\end{align*}
	As $Y_t/M_t \geq 0$, (\ref{Eq:X*bound}) follows from (\ref{Eq:X*2Y}).
	
	For (\ref{Eq:X*Integral}), note that by Doob's maximal inequality and \cite[Lemma 7.3]{abi2017affine},
	\begin{align*}
	& \E \Big[ \sup_{ t \in [0, T]} | \Phi(t) |^p \Big] \\
	& \leq C \E \Big[ \sup_{ t \in [0, T]} \Big| e^{- \int^t_0 \theta A_s V_s ds} \Big|^{2p} \Big] + C \E \Big[ \sup_{ t \in [0, T]} \Big|  \exp \Big(- \int^t_0 \frac{A^2_s}{2}  V_s ds - \int^t_0 A_s \sqrt{V_s} dW_{1s} \Big) \Big|^{2p}\Big] \\
	& \leq C \E \Big[ e^{2p \int^T_0 |\theta A_s| V_s ds} \Big] + C \E \Big[ \exp \Big(- \int^T_0 p A^2_s  V_s ds - \int^T_0 2p A_s \sqrt{V_s} dW_{1s} \Big) \Big].
	\end{align*}
	The first term is finite by Assumption \ref{Assum:V} with constant $a = 2 p |\theta| \sup_{t \in [0, T]} | A_t |$. The second term is also finite. In fact, by H\"older's inequality and Assumption \ref{Assum:V} with a constant $a = (8p^2 - 2p) \sup_{t \in [0, T]}  A^2_t$,
	\begin{align*}
	& \E \Big[ \exp \Big(- \int^T_0 p A^2_s  V_s ds - \int^T_0 2p A_s \sqrt{V_s} dW_{1s} \Big) \Big] \\
	& \leq \Big\{ \E \Big[ e^{(8p^2 - 2p) \int^T_0 A^2_s  V_s ds} \Big] \Big\}^{1/2} \Big\{ \E \Big[ \exp \Big(- 8 p^2 \int^T_0 A^2_s  V_s ds - 4 p \int^T_0 A_s \sqrt{V_s} dW_{1s} \Big) \Big] \Big\}^{1/2} \\
	& < \infty.
	\end{align*}
	$\E \Big[ \sup_{ t \in [0, T]} | X^*_t |^p \Big] < \infty$ is proved.  As for admissibility of $u^*$, $u^*$ is $\F$-adapted at first. For integrability, let $1/\hat{p} + 1/\hat{q} = 1$, $\hat{p},\, \hat{q} >1$, we have
	\begin{align*}
	& \E\Big[ \Big(\int^T_0 |\sqrt{V_t} u^*_t |dt \Big)^2 \Big] \leq C \E\Big[ \Big(\int^T_0 |A_t V_t \Phi(t) |dt \Big)^2 \Big] \\
	& \leq C \E\Big[ \sup_{ t \in [0, T]}  \Phi^2(t) \Big(\int^T_0 V_t dt \Big)^2 \Big] \leq C \Big\{\E\Big[ \sup_{ t \in [0, T]}  \Phi^{2\hat{p}} (t) \Big] \Big\}^{1/\hat{p}} \Big\{\E\Big[ \Big(\int^T_0 V_t dt \Big)^{2 \hat{q}} \Big] \Big\}^{1/\hat{q}} \\
	& \leq C \Big\{ \E\Big[ \sup_{ t \in [0, T]}  \Phi^{2\hat{p}} (t) \Big] \Big\}^{1/\hat{p}} \Big( \sup_{t \in [0, T]} \E\big[ \ V_t^{2 \hat{q}} \big] \Big)^{1/\hat{q}} < \infty
	\end{align*}
	and 
	\begin{align*}
	& \E\Big[ \int^T_0 |u^*_t|^2 dt \Big] \leq C \E\Big[ \int^T_0 A^2_t V_t \Phi^2(t) dt \Big] \\
	& \leq C \E\Big[ \sup_{ t \in [0, T]}  \Phi^2(t) \int^T_0 V_t dt \Big] \leq C \Big\{ \E\Big[ \sup_{ t \in [0, T]}  \Phi^{2\hat{p}} (t) \Big] \Big\}^{1/\hat{p}} \Big\{\E\Big[ \Big(\int^T_0 V_t dt \Big)^{\hat{q}} \Big] \Big\}^{1/\hat{q}} \\
	& \leq C \Big\{ \E\Big[ \sup_{ t \in [0, T]}  \Phi^{2\hat{p}} (t) \Big] \Big\}^{1/\hat{p}} \Big( \sup_{t \in [0, T]} \E\big[ \ V_t^{\hat{q}} \big] \Big)^{1/\hat{q}} < \infty.
	\end{align*}
	The last terms in the two inequalities above are finite by \cite[Lemma 3.1]{abi2017affine} and take $p = 2 \hat{p}$. 
\end{proof}

We are now ready to prove $u^*$ in \eqref{Eq:u*} is optimal and to derive the efficient frontier.
\begin{theorem}\label{Thm:Sol}
	Suppose the assumptions in Theorem \ref{Thm:X^*u^*} hold, then the optimal investment strategy for Problem (\ref{Eq:obj}) is given by \eqref{Eq:u*}. Moreover, \eqref{Eq:u*} is unique under a given solution $(S, V, W_1, W_2)$ to (\ref{vol})-(\ref{stock}). The variance of $X^*_T$ is
	\begin{equation}\label{Eq:VarX*}
	\text{Var} [X^*_T] = \frac{M_0}{ 2 - e^{-\int^T_0 2 r_s ds} M_0} \big( c e^{-\int^T_0 r_s ds} - x_0 \big)^2.
	\end{equation}	
\end{theorem}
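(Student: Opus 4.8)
The plan is to solve the inner Problem (\ref{Eq:innerobj}) for each fixed $\eta$ by completing the square against the auxiliary process $M$, then to maximize over the scalar $\eta$, and finally to read off the variance from the resulting saddle value. Throughout, fix $\eta\in\R$, write $\zeta=c-\eta$, and introduce the discounted deviation $D_t \triangleq X_t - \zeta e^{-\int_t^T r_s\,ds}$, so that $D_T = X_T-\zeta$ and, using (\ref{Eq:wealth}), $dD_t = (r_t D_t + \theta\sqrt{V_t}\,u_t)\,dt + u_t\,dW_{1t}$.

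The core computation is to apply It\^o's lemma to $M_t D_t^2$. Feeding in (\ref{Eq:dM}) together with the identifications (\ref{Eq:U1simp})--(\ref{Eq:U2simp}) and $A_t=\theta+\rho\sigma\psi(T-t)$, I expect the $r_t$-drifts to cancel and the remaining drift to collapse, after completing the square in $u_t$, to the single nonnegative term, giving
\begin{equation*}
d\big(M_t D_t^2\big) = M_t\big(u_t + A_t\sqrt{V_t}\,D_t\big)^2\,dt + dN_t,
\end{equation*}
where $N$ is the local martingale with $dN_t = \big(D_t^2 U_{1t} + 2 M_t D_t u_t\big)\,dW_{1t} + D_t^2 U_{2t}\,dW_{2t}$. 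The algebra that kills the $D_t^2$-coefficient is exactly $U_{1t}/M_t = \rho\sigma\sqrt{V_t}\,\psi(T-t) = (A_t-\theta)\sqrt{V_t}$, which turns $\theta^2 V_t M_t + 2\theta\sqrt{V_t}\,U_{1t} + U_{1t}^2/M_t$ into $M_t V_t A_t^2$ and cancels the term produced by completing the square.

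The main obstacle is to promote this pathwise identity to the expectation identity
\begin{equation*}
2\,\E\big[(X_T-\zeta)^2\big] = M_0 D_0^2 + \E\Big[\int_0^T M_s\big(u_s + A_s\sqrt{V_s}\,D_s\big)^2\,ds\Big]
\end{equation*}
for \emph{every} admissible $u$, i.e.\ to establish $\E[N_T]=0$. I would handle this by localization: stopping $N$ at $\tau_n=\inf\{t:\int_0^t(\cdots)^2\,ds\ge n\}\wedge T$ makes $N^{\tau_n}$ a true martingale, so taking expectations yields the identity up to $\tau_n$. To pass $n\to\infty$ I would use that $M$ is essentially bounded by $2e^{2\int_0^T r_s\,ds}$ (Property (1) of Theorem \ref{Thm:M}) and that every admissible $u$ satisfies $\E[\sup_t X_t^2]<\infty$ --- which follows from the explicit (linear) solution of (\ref{Eq:wealth}), Doob's inequality, and the two admissibility integrability conditions. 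Hence $\sup_t M_t D_t^2$ is dominated by an integrable variable, so dominated convergence on the left and monotone convergence on the right give the displayed equality. Since $M_s>0$, this representation shows at once that $\E[(X_T-\zeta)^2]\ge \tfrac12 M_0 D_0^2$, with equality if and only if $u_s = A_s\sqrt{V_s}\big(\zeta e^{-\int_s^T r_u\,du}-X_s\big)$ for $ds\times d\p$-almost all $(s,\omega)$; this simultaneously identifies the inner minimizer and proves uniqueness (two such controls drive the same well-posed wealth SDE and hence coincide).

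It then remains to optimize the inner value $\Psi(\eta)=\tfrac12 M_0\big(x_0-(c-\eta)e^{-\int_0^T r_s\,ds}\big)^2-\eta^2$ over $\eta$ and to extract the variance. Writing $b=e^{-\int_0^T r_s\,ds}$, Property (1) of Theorem \ref{Thm:M} gives $M_0 b^2<2$, so $\Psi''(\eta)=M_0 b^2-2<0$ and $\Psi$ is strictly concave; its unique maximizer is $\eta^*$ in (\ref{Eq:eta*}), and substituting $\zeta^*=c-\eta^*$ into the inner minimizer reproduces (\ref{Eq:u*}), whose admissibility and well-posedness are already supplied by Theorem \ref{Thm:X^*u^*}. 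For the efficient frontier I would invoke the Lagrangian equivalence \cite{luenberger1968opt} underlying (\ref{Eq:maxminobj}): the saddle value $\Psi(\eta^*)$ equals the optimal value of the constrained Problem (\ref{Eq:obj}), and optimality in $\eta$ forces the budget constraint $\E[X^*_T]=c$, whence $\Psi(\eta^*)=\E[(X^*_T-c)^2]=\mathrm{Var}[X^*_T]$. A direct simplification of $\Psi(\eta^*)$, using $x_0-\zeta^* b = 2(x_0-cb)/(2-M_0 b^2)$, collapses to $\tfrac{M_0}{2-M_0 b^2}\,(cb-x_0)^2$, which is exactly (\ref{Eq:VarX*}).
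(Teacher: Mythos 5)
Your proposal is correct and follows essentially the same route as the paper's proof: It\^o's formula applied to $M_t(X_t-h_t)^2$ (you simply carry the factor $2$ rather than $\tfrac12$), completion of the square via $U_{1t}/M_t=\rho\sigma\sqrt{V_t}\,\psi(T-t)$ so the drift collapses to $M_t\big(u_t+A_t\sqrt{V_t}D_t\big)^2$, localization plus dominated/monotone convergence justified through $\E\big[\sup_{t\in[0,T]}X_t^2\big]<\infty$ by Doob's inequality and the admissibility conditions, strict concavity of the outer problem from $M_0<2e^{2\int_0^T r_s\,ds}$, and the same algebraic simplification yielding (\ref{Eq:VarX*}). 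The only (cosmetic) difference is that you make explicit the envelope/budget-constraint step $\E[X_T^*]=c$ behind the identification of the saddle value with $\mathrm{Var}[X_T^*]$, which the paper leaves implicit in its appeal to the Lagrangian equivalence (\ref{Eq:maxminobj}).
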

\begin{proof}
	First, we consider the inner Problem (\ref{Eq:innerobj}) with an arbitrary $\zeta \in \R$. Denote $h_t = \zeta e^{-\int^T_t r_s ds}$. By It\^o's lemma with the property of $M$ and completing the square, for any admissible strategy $u$,
	\begin{align*}
	& d \frac{1}{2} M_t (X_t - h_t)^2 \\
	& =  \frac{1}{2} \big[ (X_t - h_t)^2 M_t \theta^2 V_t + 2 (X_t - h_t)^2 \theta \sqrt{V_t} U_{1t} + (X_t - h_t)^2 \frac{U^2_{1t}}{M_t}  + 2 M_t (X_t - h_t) \theta \sqrt{V_t} u_t \\
	& \qquad + 2 ( X_t - h_t) u_t U_{1t} + M_t u^2_t \big] dt \\
	& \quad + \frac{1}{2} \big[ (X_t - h_t)^2 U_{1t} + 2 M_t (X_t - h_t) u_t \big] dW_{1t} + \frac{1}{2} (X_t - h_t)^2 U_{2t} dW_{2t} \\
	& = \frac{1}{2} M_t \Big[ u_t + \big(\theta \sqrt{V_t} + \frac{U_{1t}}{M_t}\big)(X_t - h_t) \Big]^2 dt \\
	& \quad + \frac{1}{2} \big[ (X_t - h_t)^2 U_{1t} + 2 M_t (X_t - h_t) u_t \big] dW_{1t} + \frac{1}{2} (X_t - h_t)^2 U_{2t} dW_{2t}.
	\end{align*}
	
	As $M_t$ and $h_t$ are bounded, $\E\Big[ \int^T_0 U^2_{it} dt \Big] < \infty$ for $i =1, 2$, $u_t$ is admissible, and $X_t$ has $\p$-$\as$ continuous paths, then stochastic integrals
	\begin{equation*}
	\int^t_0 \big[ (X_s - h_s)^2 U_{1s} + 2 M_s (X_s - h_s) u_s \big] dW_{1s} \quad \text{and} \quad \int^t_0 (X_s - h_s)^2 U_{2s} dW_{2s}
	\end{equation*}
	are $(\F, \p)$-local martingales. There is an increasing localizing sequence of stopping times $\{ \tau_k \}_{k = 1, 2, ...}$ such that $\tau_k \uparrow T$ when $ k \rightarrow \infty$. The local martingales stopped by $\{ \tau_k \}_{k = 1, 2, ...}$ are true martingales. Consequently,
	\begin{equation}
	\frac{1}{2} \E[M_{\tau_k} (X_{\tau_k} - h_{\tau_k})^2 ] = \frac{1}{2} M_0 (x_0 - h_0)^2 + \frac{1}{2} \E \Big[ \int^{\tau_k}_0 M_t \Big( u_t + \big(\theta \sqrt{V_t} + \frac{U_{1t}}{M_t}\big)(X_t - h_t) \Big)^2 dt \Big].
	\end{equation}
	
	From (\ref{Eq:wealth}), by Doob's maximal inequality and the admissibility of $u(\cdot)$, 
	\begin{equation}
	\E[X^2_{\tau_k}] \leq C \Big[ x^2_0 + \E \Big[ \big( \int^T_0 | u_t \sqrt{V_t}| dt \big)^2 \Big] + \E \Big[  \int^T_0 u^2_t dt  \Big] \Big] < \infty.
	\end{equation}
	Then $M_{\tau_k} (X_{\tau_k} - h_{\tau_k})^2$ is dominated by a non-negative integrable random variable for all $k$. Sending $k$ to infinity, by the dominated convergence theorem and the monotone convergence theorem, we derive
	\begin{equation}\label{Eq:Square}
	\E[ (X_T - \zeta )^2 ] = \frac{1}{2} M_0 (x_0 - h_0)^2 + \frac{1}{2} \E \Big[ \int^T_0 M_t \Big( u_t + \big(\theta \sqrt{V_t} + \frac{U_{1t}}{M_t}\big)(X_t - h_t) \Big)^2 dt \Big].
	\end{equation}
	Therefore, the cost functional $\E[ (X_T - \zeta )^2 ]$ is minimized when
	\begin{equation}
	u_t = - \big(\theta \sqrt{V_t} + \frac{U_{1t}}{M_t}\big)(X_t - h_t).
	\end{equation}
	Then $ \E[ (X_T - \zeta )^2 ] = \frac{1}{2} M_0 (x_0 - h_0)^2$. The uniqueness of $u^*$ follows directly from \eqref{Eq:Square} and $M_t > 0$, $\p$-$\as$, $\forall \; t \in [0, T]$. To solve the outer maximization problem in (\ref{Eq:maxminobj}), consider
	\begin{equation}
	J(x_0; u(\cdot)) = \frac{1}{2} M_0 \big[x_0 - (c - \eta) e^{- \int^T_0 r_s ds} \big]^2 - \eta^2.
	\end{equation}
	The first and second order derivatives are
	\begin{align*}
	\frac{\partial J}{\partial \eta} &=  M_0 \big[x_0 - (c - \eta) e^{- \int^T_0 r_s ds} \big] e^{- \int^T_0 r_s ds} - 2 \eta, \\
	\frac{\partial^2 J}{\partial \eta^2} &=  M_0 e^{- 2 \int^T_0 r_s ds} - 2 < 0,
	\end{align*}
	where we have used the strict inequality $M_0 < 2 e^{\int^T_0 2 r_s ds}$, by Theorem \ref{Thm:M}.
	
	Then the optimal value for $\eta$ is given by (\ref{Eq:eta*}), solved from $\frac{\partial J}{\partial \eta} = 0$. $\text{Var}[X^*_T]$ is obtained by direct simplification of $J(x_0; u(\cdot))$ with $\eta^*$.	
\end{proof}

Although the Volterra Heston model is non-Markovian and non-semimartingale in nature, the optimal control $u^*$ in \eqref{Eq:u*} does not rely on the whole volatility path. Moreover, the optimal amount of wealth in the stock, $\pi^*_t$, does not depend on the volatility value directly, but rather on the roughness and dynamics of volatility through parameters and the Riccati-Volterra equation \eqref{Eq:psi}. If we let kernel $K = \id$, it is then clear that the Volterra Heston model (\ref{vol}) reduces to the classic Heston model \cite{heston1993closed}. Our results in Theorem \ref{Thm:X^*u^*} and Theorem  \ref{Thm:Sol} indicate that the $u^*$ in (\ref{Eq:u*}) is optimal even under a general filtration $\F$. It extends the corresponding result in \cite{cerny2008mean,shen2015square} where the filtration is chosen as the Brownian filtration. As a sanity check, the following corollary verifies that our solution reduces to the one under the Heston model.

\begin{corollary}
	Consider the Heston model, that is, the kernel $K = \id$. Suppose other assumptions in Theorem \ref{Thm:X^*u^*} hold, then the optimal strategy \eqref{Eq:u*} is the same as the one in \cite{cerny2008mean}.
\end{corollary}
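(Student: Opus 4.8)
The plan is to specialize each object constructed in Theorems~\ref{Thm:M} and~\ref{Thm:X^*u^*} to the kernel $K=\id$ and to verify that, after a translation of notation, the feedback control~\eqref{Eq:u*} reproduces the one obtained in \cite{cerny2008mean} via the variance-optimal martingale measure. With $K=\id$ the convolution $K*f$ reduces to ordinary integration $\int_0^t f(s)\,ds$, so \eqref{vol} collapses to the classical square-root diffusion $dV_t=\kappa(\phi-V_t)\,dt+\sigma\sqrt{V_t}\,dB_t$ and \eqref{stock} recovers Heston's model exactly. Correspondingly, the resolvent of $\lambda K$ is the constant-kernel resolvent $R_\lambda(t)=\lambda e^{-\lambda t}$ from Table~\ref{Tab:Kernel}, and the forward variance~\eqref{Eq:xi} reduces to its familiar exponentially weighted form.

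First I would reduce the Riccati--Volterra equation~\eqref{Eq:psi} to an ordinary Riccati equation. Writing the convolution as integration and differentiating in $t$, \eqref{Eq:psi} yields
\begin{equation}
\psi'(t)=\frac{(1-2\rho^2)\sigma^2}{2}\,\psi^2(t)-\lambda\psi(t)-\theta^2,\qquad \psi(0)=0,
\end{equation}
a scalar Riccati ODE with constant coefficients whose solution is explicit in hyperbolic functions. Substituting this $\psi$ into $A_t=\theta+\rho\sigma\psi(T-t)$ makes the feedback gain in~\eqref{Eq:u*} a purely deterministic function of time, so the optimal amount invested in stock becomes $\pi^*_t=A_t\big(\zeta^* e^{-\int_t^T r_s\,ds}-X^*_t\big)$. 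At the same time, Property~(3) of Theorem~\ref{Thm:M} specializes (with $\alpha=1$, so $I^{1-\alpha}=\id$) to $M_0=2\exp\big[2\int_0^T r_s\,ds+\kappa\phi\int_0^T\psi(s)\,ds+V_0\psi(T)\big]$, which fixes $\eta^*$ in~\eqref{Eq:eta*}, the target $\zeta^*$, and the variance~\eqref{Eq:VarX*} in closed form.

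It then remains to identify $A_t$, together with the accompanying constants, with their counterparts in \cite{cerny2008mean}. The approach is to show that both feedback coefficients solve the \emph{same} constant-coefficient Riccati ODE with the same boundary data and to invoke uniqueness. Concretely, I would pass to the time-reversed variable $s=T-t$, so that the initial condition $\psi(0)=0$ corresponds to the terminal data at $t=T$ in the formulation of \cite{cerny2008mean}; match their opportunity-process coefficients to our parameters through the measure change~\eqref{Eq:tildeP} (in particular their effective mean-reversion rate to $\lambda=\kappa+2\theta\rho\sigma$); and check that $M_0$ coincides with their value-function normalization, so that the target levels $\eta^*,\zeta^*$ agree. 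Once the two ODEs and their boundary conditions are matched, uniqueness of the Riccati solution forces $A_t$ to equal their deterministic gain, and the target-tracking structure of~\eqref{Eq:u*} is identical to theirs.

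The main obstacle is not analytic but a matter of bookkeeping: \cite{cerny2008mean} derive the strategy from the variance-optimal martingale measure and an opportunity process under conventions (sign, measure, and parametrization) different from our completion-of-squares derivation under $\tilde\p$. The substantive task is therefore to build a careful dictionary between the two parameter sets and to confirm that both reduce to the same Riccati ODE; once that is done the equality of the strategies is immediate from ODE uniqueness, and no further computation is required.
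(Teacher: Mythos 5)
Your proposal follows the same route as the paper's own proof: with $K=\id$ the Riccati--Volterra equation (\ref{Eq:psi}) differentiates to the constant-coefficient Riccati ODE, the resolvent becomes $R_\lambda(t)=\lambda e^{-\lambda t}$ so the exponent of $M_t$ is affine in $V_t$, and the identification with \cite{cerny2008mean} is made by matching ODEs and invoking uniqueness. The paper simply carries out the bookkeeping you defer --- writing the exponent as $w(T-t)V_t+y(T-t)$, verifying $w=\psi$ and $\dot y=\kappa\phi\,w$ so that $M_t/2$ equals the opportunity process $L_t$, and then reading off their optimal hedge $\varphi(x,H)$ with constant $H=\zeta$ --- so your plan is correct and essentially the paper's argument.
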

\begin{proof}
	Without loss of generality, suppose $r_t = 0$, as in \cite{cerny2008mean}. We first match $M_t/2$ in (\ref{Eq:M}) with opportunity process $L_t$ in \cite[Equation (3.2)]{cerny2008mean}.
	
	Note the resolvent in (\ref{Eq:R_lambda}) reduces to $R_\lambda (t) = \lambda e^{-\lambda t}$ and the forward variance in (\ref{Eq:xi}) is 
	\begin{equation}
	\xi_t(s) = e^{-\lambda (s - t)} V_t + \frac{\kappa \phi}{\lambda} \left( 1 - e^{-\lambda(s - t)} \right).
	\end{equation}
	Therefore,
	\begin{equation*}
	\int^T_t \xi_t(s) ds = \frac{1 - e^{-\lambda(T-t)}}{\lambda}  V_t + \frac{\kappa \phi}{\lambda} \left( T - t -  \frac{1 - e^{-\lambda(T-t)}}{\lambda} \right)
	\end{equation*}
	and 
	\begin{equation*}
	\int^T_t \psi^2(T-s) \xi_t(s) ds =  V_t \int^T_t \psi^2(T - s) e^{-\lambda(s - t)} ds + \frac{\kappa \phi}{\lambda} \int^T_t \big[ 1 - e^{-\lambda (s - t)} \big] \psi^2(T - s) ds.
	\end{equation*}
	Then
	\begin{equation}
	\int^T_t \big[ - \theta^2 \xi_t(s) + \frac{(1-2\rho^2) \sigma^2}{2} \psi^2(T-s) \xi_t(s) \big] ds = w(T-t) V_t + y(T-t),
	\end{equation}
	where
	\begin{align*}
	w(T-t) &\triangleq \frac{(1 - 2\rho^2)\sigma^2}{2} \int^T_t \psi^2(T - s) e^{-\lambda(s - t)} ds - \theta^2 \frac{1 - e^{-\lambda(T-t)}}{\lambda}, \\
	y(T-t) &\triangleq \frac{(1 - 2\rho^2)\sigma^2}{2} \frac{\kappa \phi}{\lambda} \int^T_t \big[ 1 - e^{-\lambda (s - t)} \big] \psi^2(T - s) ds - \theta^2 \frac{\kappa \phi}{\lambda} \left(T - t -  \frac{1 - e^{-\lambda(T-t)}}{\lambda} \right).
	\end{align*}
	
	Replacing $t$ with $T - t$ and taking derivative on $t$ give
	\begin{align*}
	\dot{w}(t) &= \frac{(1 - 2\rho^2)\sigma^2}{2} \psi^2(t) - \lambda \frac{(1 - 2\rho^2)\sigma^2}{2} \int^T_{T-t} \psi^2(T - s) e^{-\lambda(s - T + t)} ds - \theta^2 e^{-\lambda t} \\
	& =  \frac{(1 - 2\rho^2)\sigma^2}{2} \psi^2(t) - \lambda w(t) - \theta^2.
	\end{align*} 
	Comparing with (\ref{Eq:psi}), we find $w(t) = \psi(t)$. Moreover,
	\begin{align*}
	\dot{y}(t) & = \frac{(1 - 2\rho^2)\sigma^2}{2} \frac{\kappa \phi}{\lambda} \int^T_{T-t} \lambda e^{-\lambda (s - T + t)} \psi^2(T - s) ds - \theta^2 \frac{\kappa \phi}{\lambda} \left( 1 - e^{-\lambda t} \right) \\
	& = \kappa \phi w(t).
	\end{align*}
	
	$y(t)$ and $w(t)$ satisfy the same ODEs as in \cite[Equations (A.1)-(A.4)]{cerny2008mean}, with our notations. Therefore, $M_t/2$ in (\ref{Eq:M}) reduces to  $L_t$ in \cite[Equation (3.2)]{cerny2008mean}.
	
	Consider the inner Problem (\ref{Eq:innerobj}). With a constant $H = \zeta$, terms in the optimal hedge $\varphi(x, H)$ \cite[p.476]{cerny2008mean} are reduced to
	\begin{align}
	\xi = 0, \quad a= (\theta + \psi(T-t) \rho \sigma)/ S_t, \quad V = \zeta, \quad \text{and} \quad x + \varphi(x, H) \cdot S = X^*_t. 
	\end{align}
	Then it is clear that the optimal strategies are the same.
\end{proof}

\section{Numerical studies}\label{Sec:Numerical}
In this section, we restrict ourself to the case with $K(t) = \frac{t^{\alpha-1}}{\Gamma(\alpha)}$, $\alpha \in (1/2, 1)$, for the rough Heston model in \cite{eleuch2019char}. $\alpha = 1$ recovers the classic Heston model. We examine the effect of $\alpha$ on the optimal investment strategy and efficient frontier. 

The first step is to solve the Riccati-Volterra equation (\ref{Eq:psi}) numerically. Following \cite{eleuch2019char}, we use the fractional Adams method in \cite{diethelm2002predictor,diethelm2004error}. The convergence of this numerical method is given in \cite{li2009adams}. Readers may refer to \cite[Section 5.1]{eleuch2019char} for more details about the procedure.

In Figure (\ref{Fig:psi}), $\psi$ decreases when $\alpha$ becomes smaller under certain specific parameters, close to the calibration result in \cite{eleuch2019char} with one extra risk premium parameter $\theta$. However, one cannot expect $\psi$ to be monotone in $\alpha$ in general (see Figure (\ref{Fig:psi_new})). Figures (\ref{Fig:psi})-(\ref{Fig:psi_new}) also confirm the claim that $\psi \leq 0$ when $ 1 - 2\rho^2 > 0$.

\begin{figure}[!h]
	\centering
	\begin{minipage}{0.5\textwidth}
		\centering
		\includegraphics[width=0.9\textwidth]{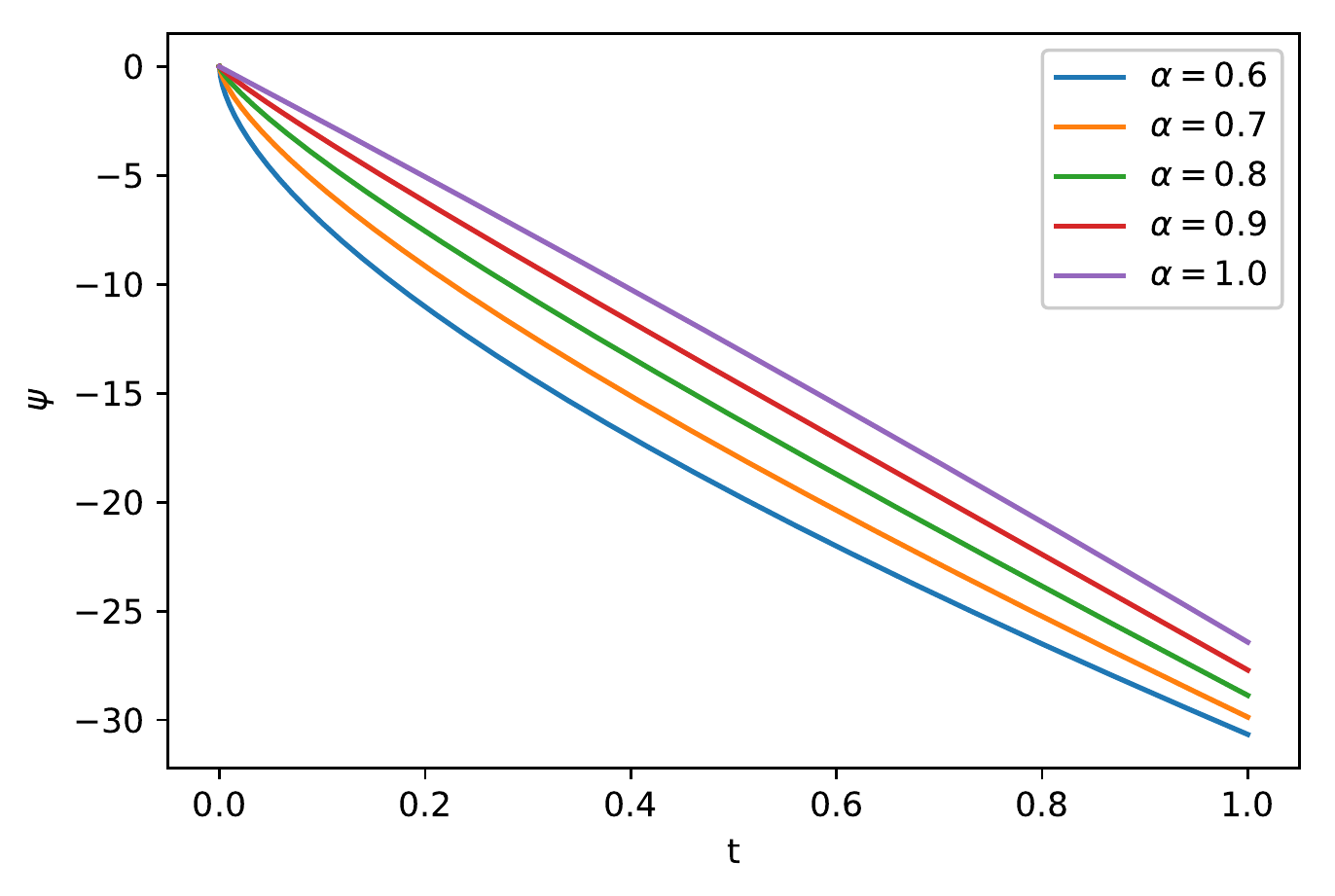}
		\subcaption{$\psi$ under parameters in \cite{eleuch2019char}}\label{Fig:psi}
	\end{minipage}%
	\begin{minipage}{0.5\textwidth}
		\centering
		\includegraphics[width=0.9\textwidth]{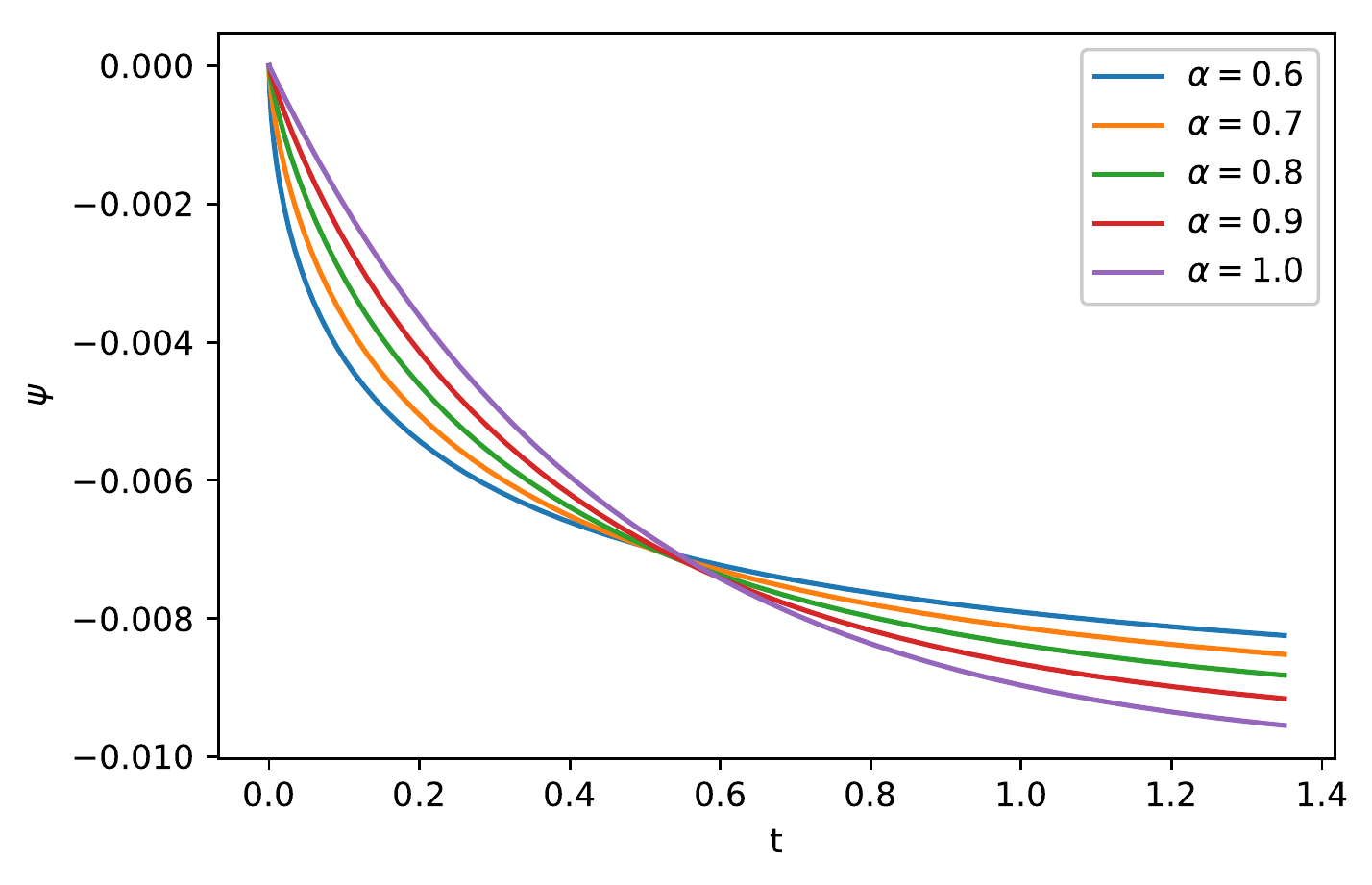}
		\subcaption{$\psi$ under another setting}\label{Fig:psi_new}
	\end{minipage}
	\caption{Plot of $\psi$ under different $\alpha$. Other parameters are as follows. In Figure (\ref{Fig:psi}), vol-of-vol $\sigma = 0.03$, mean-reversion speed $\kappa = 0.1$, risk premium parameter $\theta = 5$, correlation $\rho = -0.7$, and time horizon $T = 1$. In Figure (\ref{Fig:psi_new}), $\sigma = 0.04$, $\kappa = 2.25$, $\theta = 0.15$, $\rho = -0.56$, and $T=1.35$.}
\end{figure}

The relationship between $u^*$ and $\alpha$ is not straightforward and may change with different combinations of parameters. We emphasize that the following analysis is based on the parameter setting detailed in the descriptions of the figures. Consider the setting in Figure (\ref{Fig:psi}) first. Interestingly, the effect of $\alpha$ on $u^*$ is significantly influenced by $\sigma$. This can be explained using (\ref{Eq:u*}). If the correlation $\rho$ between stock and volatility is negative due to the leverage effect in the equity market, $\theta + \rho \sigma \psi (T-t)$ will increase as $\alpha$ decreases, as shown in Figure (\ref{Fig:psi}). In contrast, $\zeta^* e^{-\int^T_t r_s ds} - X^*_t \geq 0$ by Theorem \ref{Thm:X^*u^*}. Note 
\begin{equation}
\zeta^* = c - \eta^* = \frac{2c - e^{-\int^T_0 r_s ds}M_0 x_0}{2 - e^{-\int^T_0 2 r_s ds} M_0}.
\end{equation}
The $M_0$ in (\ref{Eq:fracM0}) is an increasing function on $\alpha$ because $\psi$ is negative. Then $\zeta^*$ will be smaller if $\alpha$ is smaller, under certain parameters. Therefore, $\zeta^* e^{-\int^T_t r_s ds} - X^*_t$ and $\theta + \rho \sigma \psi (T-t)$ move in different directions when $\alpha$ is decreasing. If $\sigma$ is small, $\zeta^* e^{-\int^T_t r_s ds} - X^*_t$ will dominate $\theta + \rho \sigma \psi (T-t)$. Then $u^*$ will decrease as $\alpha$ becomes smaller. If $\sigma$ is relatively large, $\theta + \rho \sigma \psi (T-t)$ will dominate $\zeta^* e^{-\int^T_t r_s ds} - X^*_t$. Then $u^*$ increases when $\alpha$ becomes smaller. The above effect of vol-of-vol $\sigma$ also appears under the parameters setting in Figure (\ref{Fig:psi_new}), where $\psi$ is not monotone in $\alpha$. Figures (\ref{Fig:u4smallsig})-(\ref{Fig:u4bigsig}) display the optimal investment strategy $u^*$. We make use of the open-source Python package {\bf differint}\footnote{Available at \url{https://github.com/differint/differint} } to calculate the fractional integrals $I^{1-\alpha}$ and $I^1$ in (\ref{Eq:fracM0}). Assumption \ref{Assum:V} is validated under the setting in Figures (\ref{Fig:u4smallsig})-(\ref{Fig:u4bigsig}).

\begin{figure}[!h]
	\centering
	\begin{minipage}{0.5\textwidth}
		\centering
		\includegraphics[width=0.9\textwidth]{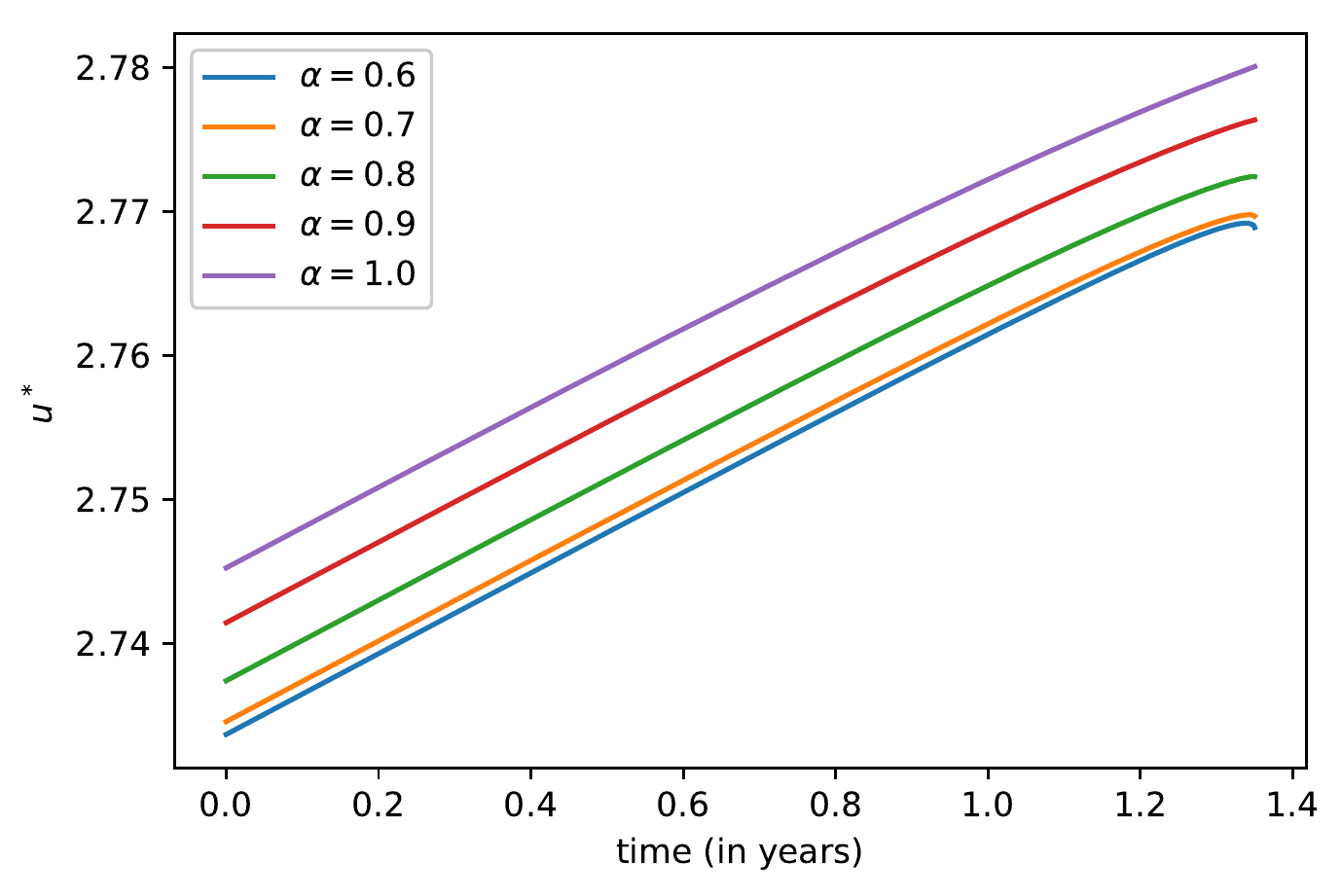}
		\subcaption{$u^*$ under $\sigma=0.04$}\label{Fig:u4smallsig}
	\end{minipage}%
	\begin{minipage}{0.5\textwidth}
		\centering
		\includegraphics[width=0.9\textwidth]{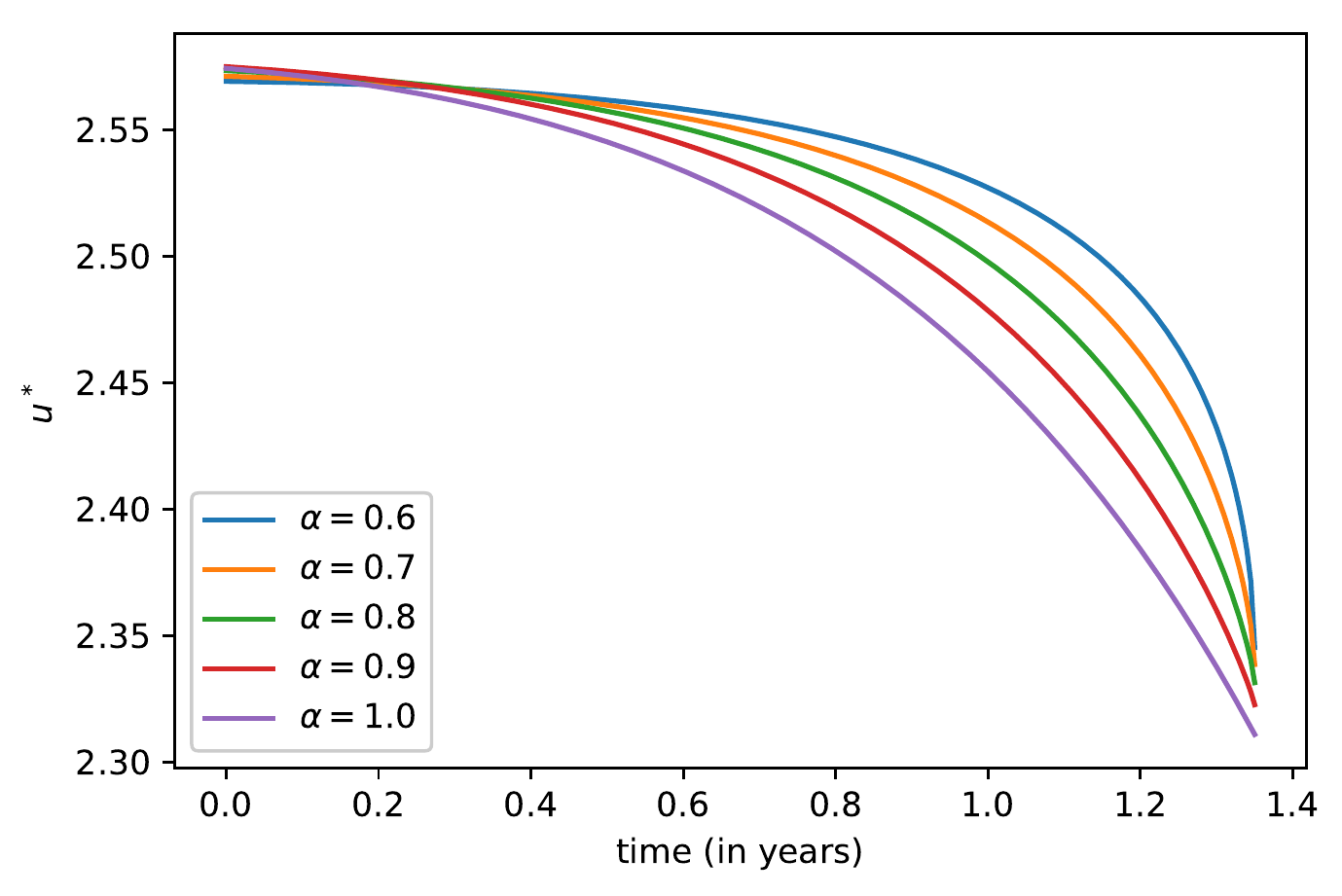}
		\subcaption{$u^*$ under $\sigma = 3$}\label{Fig:u4bigsig}
	\end{minipage}
	\caption{Optimal strategy $u^*$ with $\alpha=0.6, 0.7, 0.8, 0.9$, and $1.0$. In both subplots, we set initial wealth $x_0 = 1$, risk-free rate $r= 0.01$, initial variance $V_0 = 0.5$, long-term mean level $\phi = 0.04$, and expected terminal wealth $c = x_0 e^{(r+0.1)T}$. For simplicity, we set $V_t = 0.5$ and $X^*_t = 1$ for all time $t \in [0, T]$. The other parameters are the same as in Figure (\ref{Fig:psi_new}), namely, $\kappa = 2.25$,  $\theta = 0.15$, $\rho = -0.56$, and $T=1.35$. Figures (\ref{Fig:u4smallsig})-(\ref{Fig:u4bigsig}) only differ in the vol-of-vol $\sigma$. }
\end{figure}

Figures (\ref{Fig:u4smallsig})-(\ref{Fig:u4bigsig}) are a sensitivity analysis as we keep most of the parameters unchanged, and vary a few of them. Specifically, the use of constant $V_t$ and $X^*_t$ in Figures (\ref{Fig:u4smallsig})-(\ref{Fig:u4bigsig}) has the following interpretation. We are interested in the sensitivity of the optimal control on the Hurst parameter through $\alpha$. As the other parameters being fixed, if we observe $V_t = 0.5$ and $X^*_t = 1$ at $t \in [0, T]$, Figures (\ref{Fig:u4smallsig})-(\ref{Fig:u4bigsig}) illustrate the marginal effect of the Hurst parameter on the investment strategy. The constant values of $V_t$ and $X^*_t$ are not from a realized path.

Figures (\ref{Fig:u4smallsig})-(\ref{Fig:u4bigsig}) only provide a marginal effect of $\alpha$; thus, we conduct a further numerical analysis under the settings in \cite{abi2019lifting}. Consider a realistic situation in which the investor calibrates two sets of parameters for the Heston model and rough Heston model for a given implied volatility surface. We contrast the two strategies induced from the calibrated parameters. Figure (\ref{Fig:pi_sim}) exhibits the optimal amount of wealth $\pi^*$ with one simulation path of $V_t$ in Figure (\ref{Fig:roughv}) by the lifted Heston approach \cite{abi2019lifting}. Assumption \ref{Assum:V} holds true under the setting in Figure 3. Figure (\ref{Fig:At}) plots the $A_t = \theta + \rho \sigma \psi(T-t)$. Furthermore, $\zeta^* = 30.7458$ for the rough Heston model and $\zeta^* = 21.6351$ for the classic Heston model. The optimal strategy under the rough Heston model consistently suggests holding more in the stock. We stress that this is a persistent phenomenon for all of the simulation runs and is not limited to the particular one in Figure (\ref{Fig:pi_sim}). Indeed, Figures (\ref{Fig:uRough})-(\ref{Fig:uHn}) show the mean and confidence intervals of the strategies. The rough Heston strategy has larger values during the whole investment horizon. It can be explained with Figure (\ref{Fig:At}) and $\zeta^*$ reported. A rough Heston investor has a larger $A_t \zeta^*$ but a smaller $A_t$. Moreover, Figure (\ref{Fig:wealth}) illustrates that the rough Heston strategy has an average terminal wealth closer to the target $c = 1.1163$. Finally, we emphasize that Figure (\ref{Fig:pi_sim}) and Figures (\ref{Fig:uRough})-(\ref{Fig:uHn}) do not conflict with Figures (\ref{Fig:u4smallsig})-(\ref{Fig:u4bigsig}) because the mean-reversion rate and the vol-of-vol are different for the two strategies in Figure (\ref{Fig:pi_sim}). See \cite[Table 6]{abi2019lifting} and \cite[Table 4]{abi2019lifting} for more details.

\begin{figure}[!h]
	\centering
	\begin{minipage}{0.33\textwidth}
		\centering
		\includegraphics[width=0.95\textwidth]{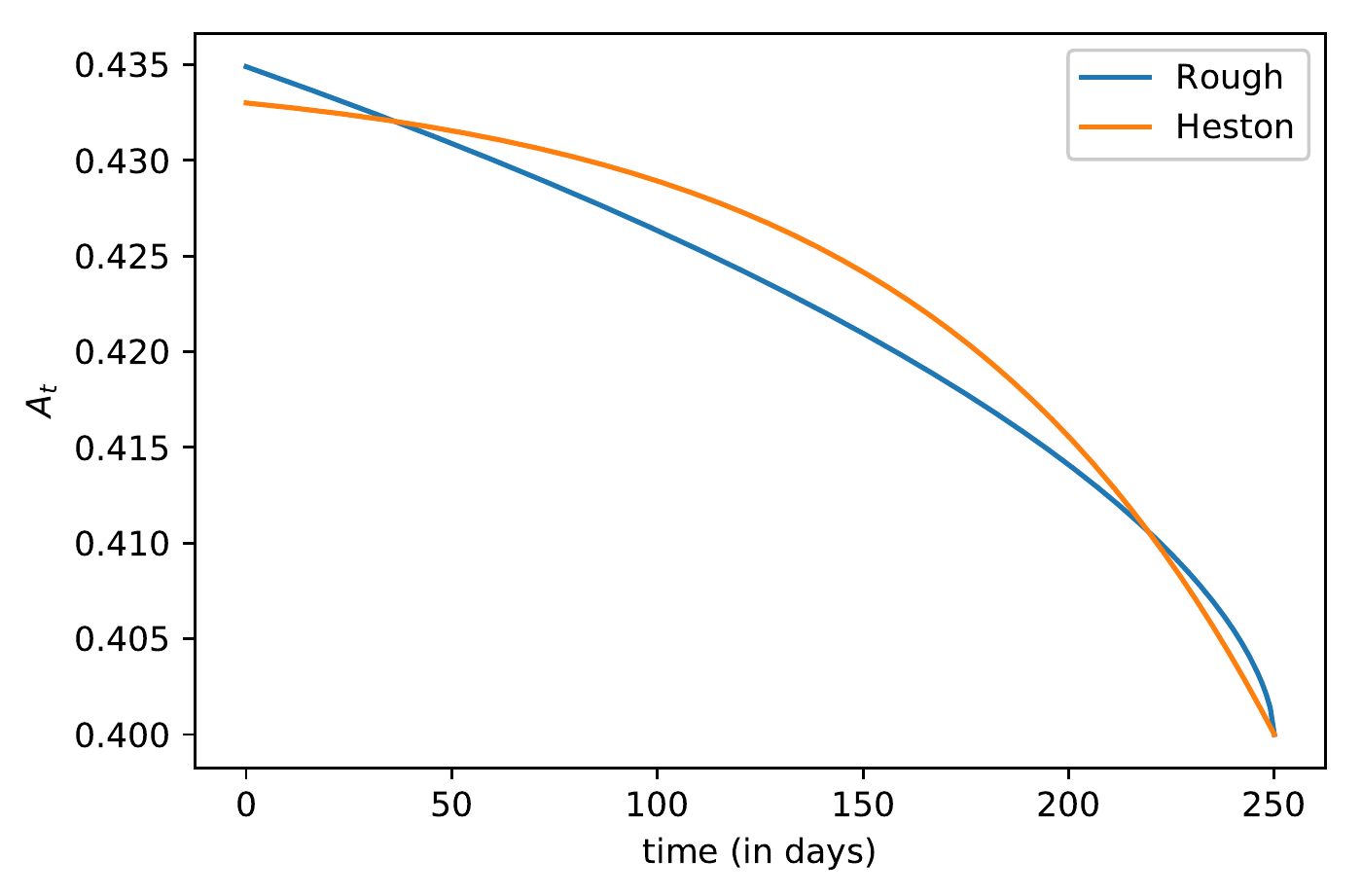}
		\subcaption{$A_t$}\label{Fig:At}
	\end{minipage}%
	\begin{minipage}{0.33\textwidth}
		\centering
		\includegraphics[width=0.95\textwidth]{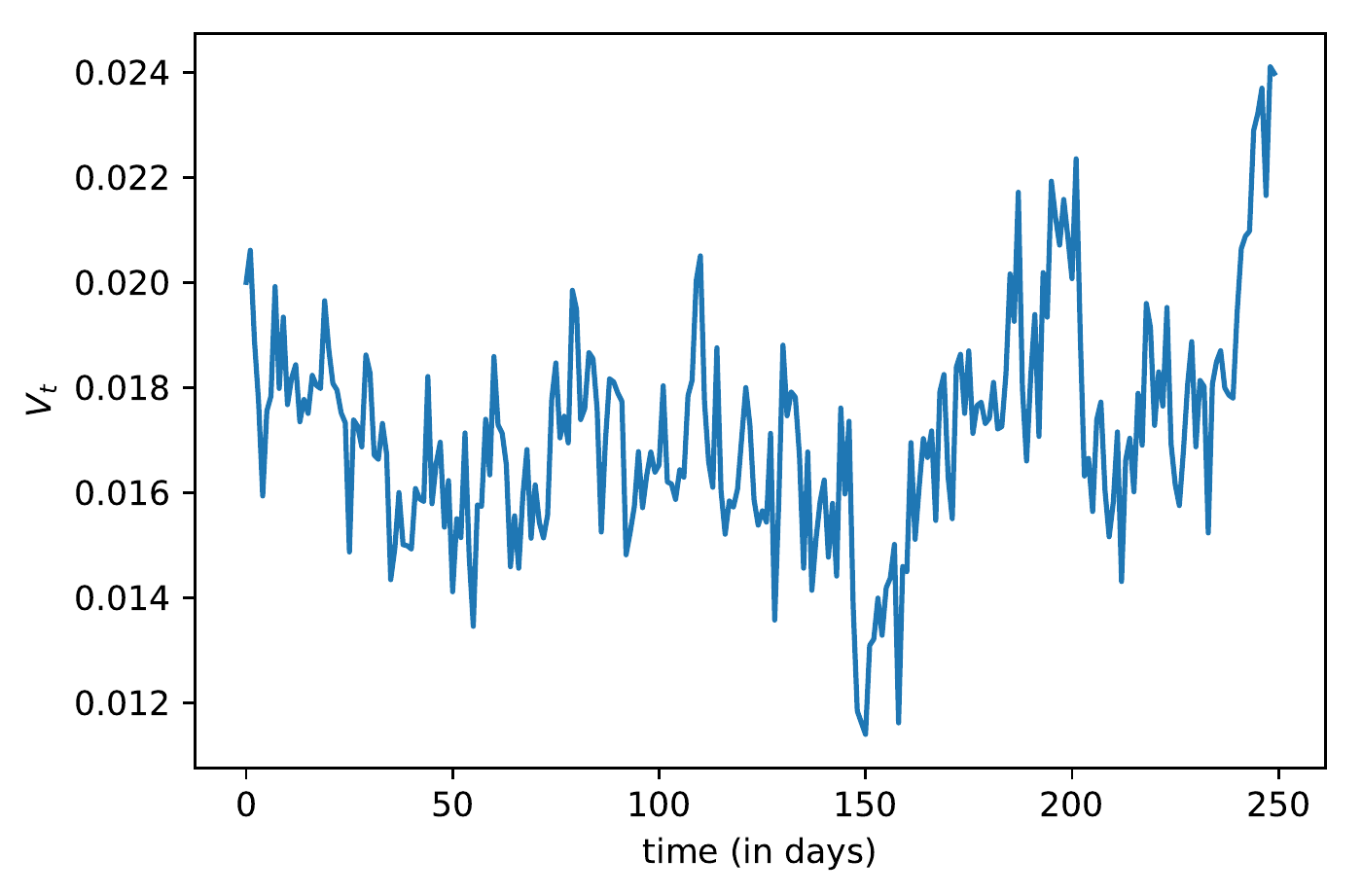}
		\subcaption{volatility}\label{Fig:roughv}
	\end{minipage}%
	\begin{minipage}{0.33\textwidth}
		\centering
		\includegraphics[width=0.95\textwidth]{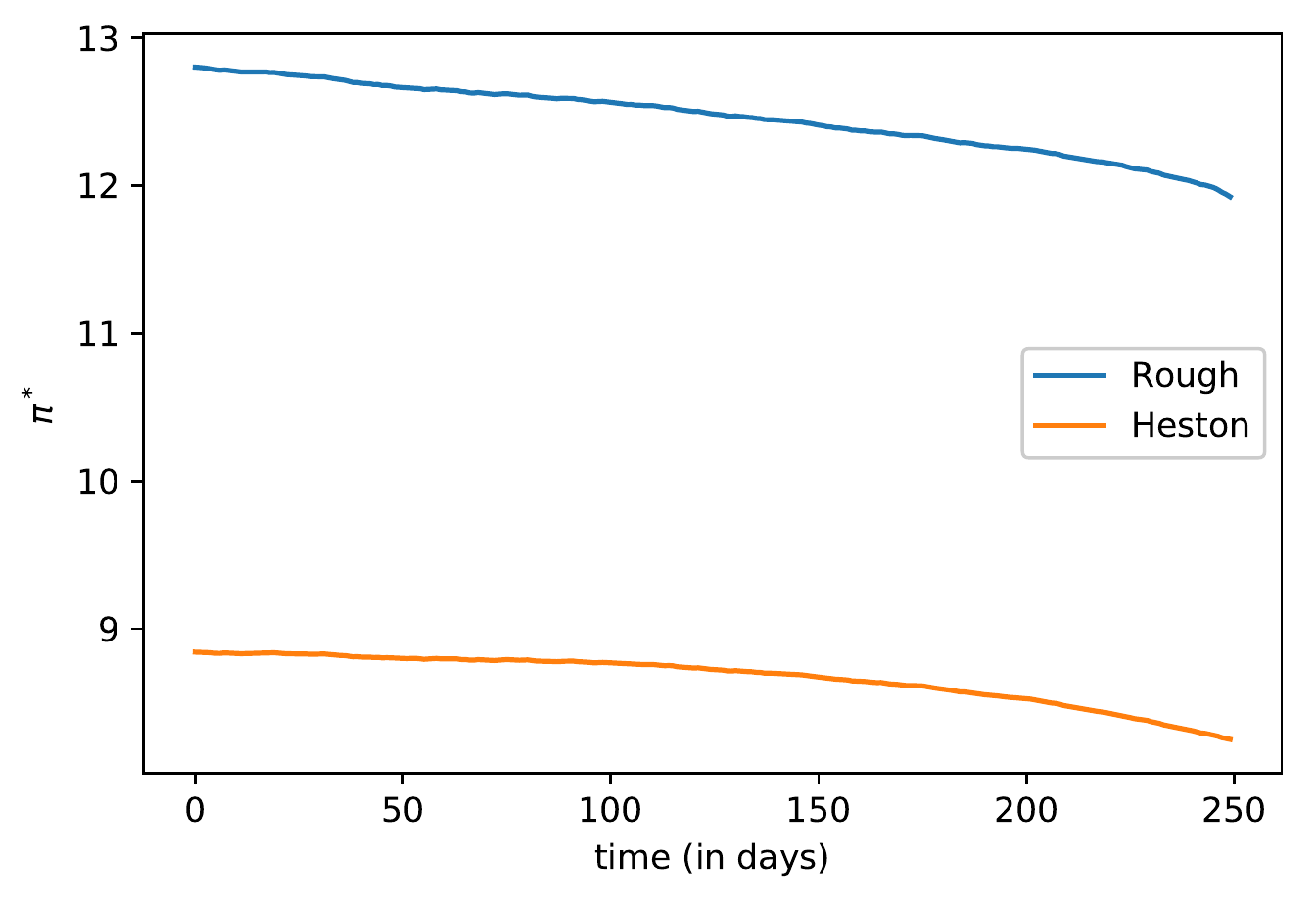}
		\subcaption{$\pi^*$}\label{Fig:pi_sim}
	\end{minipage}
	\caption{Investment strategies under the Heston and rough Heston models. The variance process is simulated with the lifted Heston model in \cite{abi2019lifting}. The parameters for simulation are specified in \cite[Equations (23) and (26)]{abi2019lifting} with $\alpha=0.6$. The path is rougher than that of the classic Heston model. Moreover, we implement the Euler scheme for the stock process. The simulation is run with $250$ time steps for one year, corresponding to the $250$ trading days in a year. The investor under the Heston model uses the calibrated parameters in \cite[Table 6]{abi2019lifting} to implement the optimal strategy with $\alpha = 0.59973346$ being the calibrated value. The investor under rough Heston model uses \cite[Table 4]{abi2019lifting} instead. We set $x_0 = 1$, $r = 0.01$, $\theta = 0.4$, $T = 1$, and $c = x_0 e^{(r+0.1)T} = 1.1163$. }
\end{figure}

\begin{figure}[!h]
	\centering
	\begin{minipage}{0.33\textwidth}
		\centering
		\includegraphics[width=0.9\textwidth]{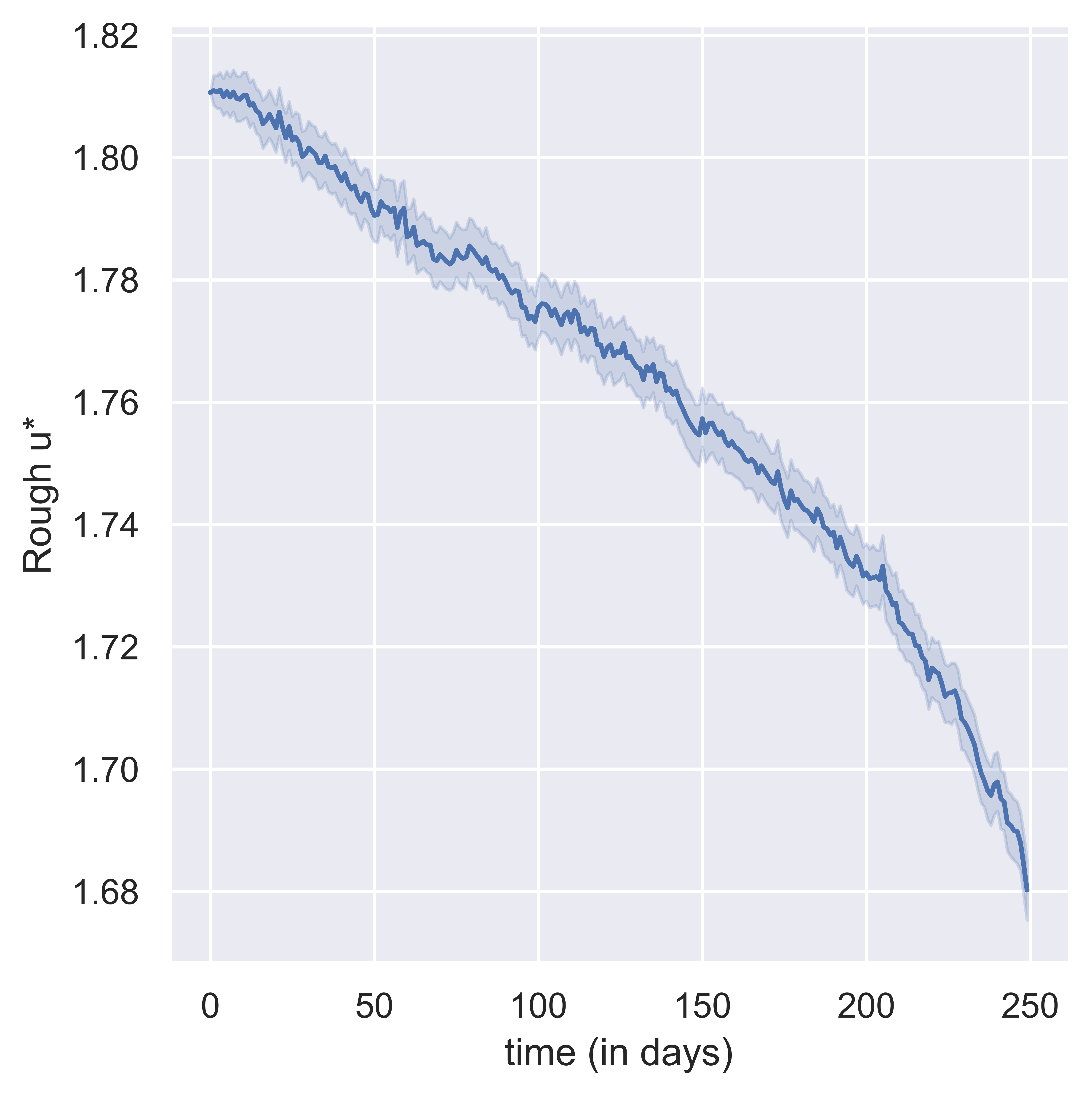}
		\subcaption{$u^*$ under the rough Heston model}\label{Fig:uRough}
	\end{minipage}%
	\begin{minipage}{0.33\textwidth}
		\centering
		\includegraphics[width=0.9\textwidth]{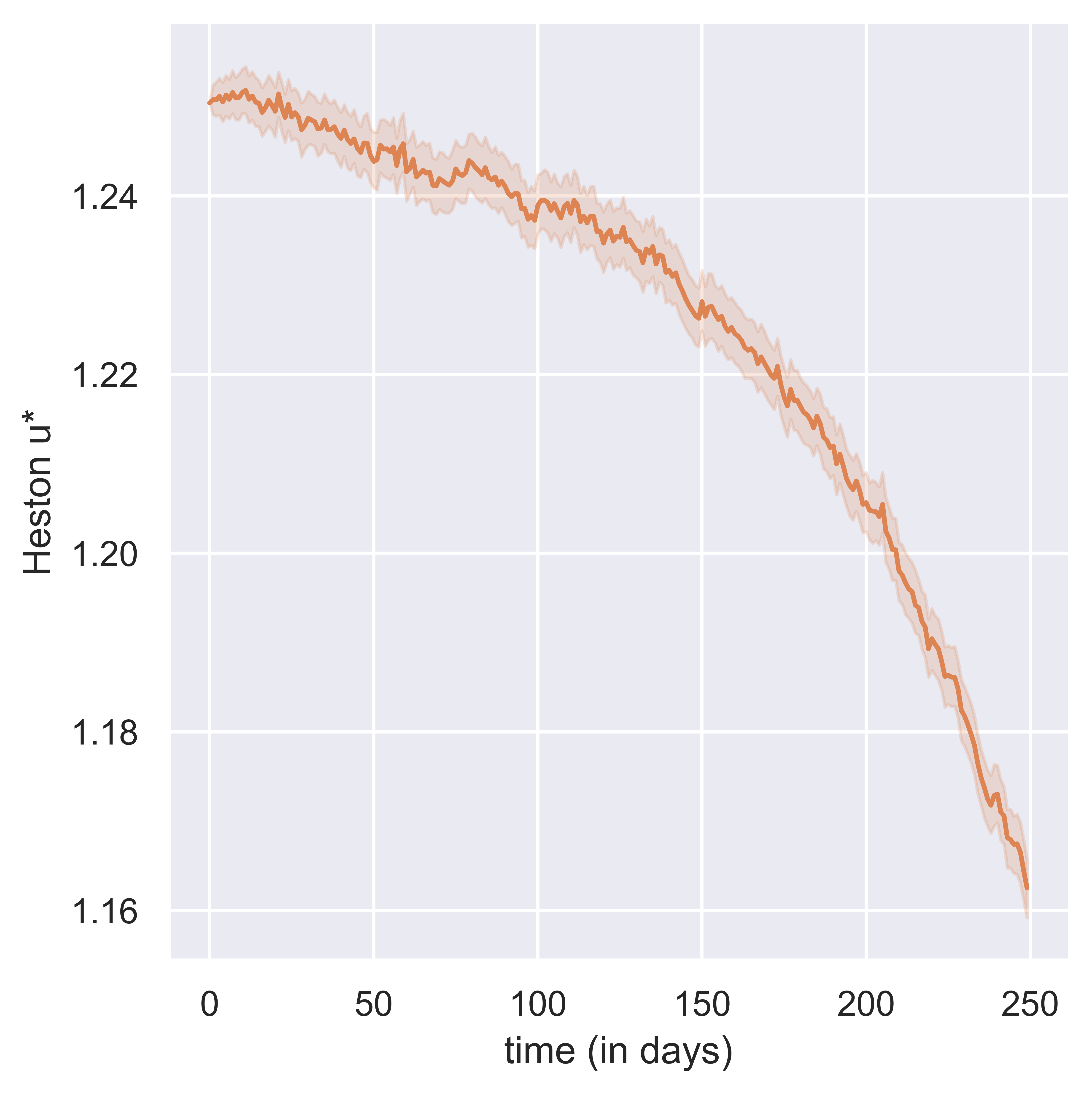}
		\subcaption{$u^*$ under the Heston model}\label{Fig:uHn}
	\end{minipage}%
	\begin{minipage}{0.33\textwidth}
		\centering
		\includegraphics[width=0.9\textwidth]{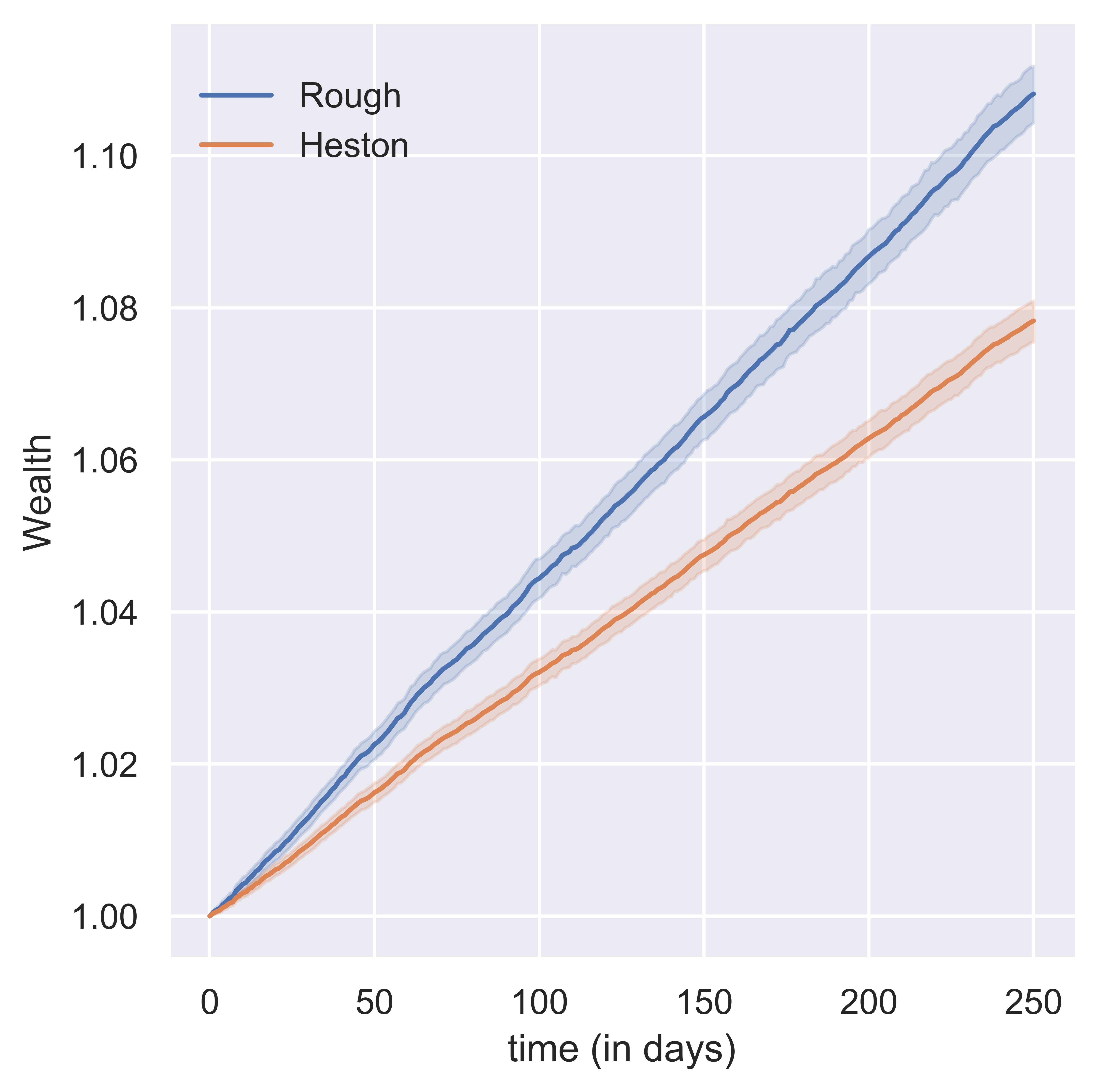}
		\subcaption{Wealth}\label{Fig:wealth}
	\end{minipage}
	\caption{Statistics for strategies and wealth. Based on 3000 simulated paths, the solid line plots the mean and the shadow area is the $95\%$ confidence interval estimated by bootstrapping. The rough Heston model suggests investing more and the terminal wealth is closer to the expected value $c=1.1163$. The parameters are the same as in Figure 3.}
\end{figure}

Recently, a trading strategy has been proposed to buy the roughest stocks and sell the smoothest stocks \cite{glasserman2019buy}. This model-free strategy aims at investments in multiple assets. Although we consider a single risky asset with a specific model, it is still interesting to compare that strategy with ours. Note that a stock is rougher for a smaller $\alpha$. Figures (\ref{Fig:u4smallsig})-(\ref{Fig:u4bigsig}) indicate that $\alpha$ is not the only factor determining the investment in a stock. The trading idea in \cite{glasserman2019buy} agrees with Figure (\ref{Fig:u4bigsig}), because the optimal investment position $u^*$ is larger for a smaller $\alpha$. However, an inconsistency occurs in Figure (\ref{Fig:u4smallsig}). Indeed, if we use the VVIX index as a proxy for the vol-of-vol, then the vol-of-vol seems larger in 2007, 2008, 2010, and 2015. The buy-rough-sell-smooth strategy \cite{glasserman2019buy} performs better in 2005, 2007, 2008, 2010, and 2014 than in other years, as shown in \cite[Figure 3]{glasserman2019buy}. This consistency suggests that vol-of-vol may also be important when roughness is considered. It would be interesting to test the performance of strategies based on roughness and vol-of-vol in a future study.

\begin{figure}[!h]
	\centering
	\begin{minipage}{0.5\textwidth}
		\centering
		\includegraphics[width=0.9\textwidth]{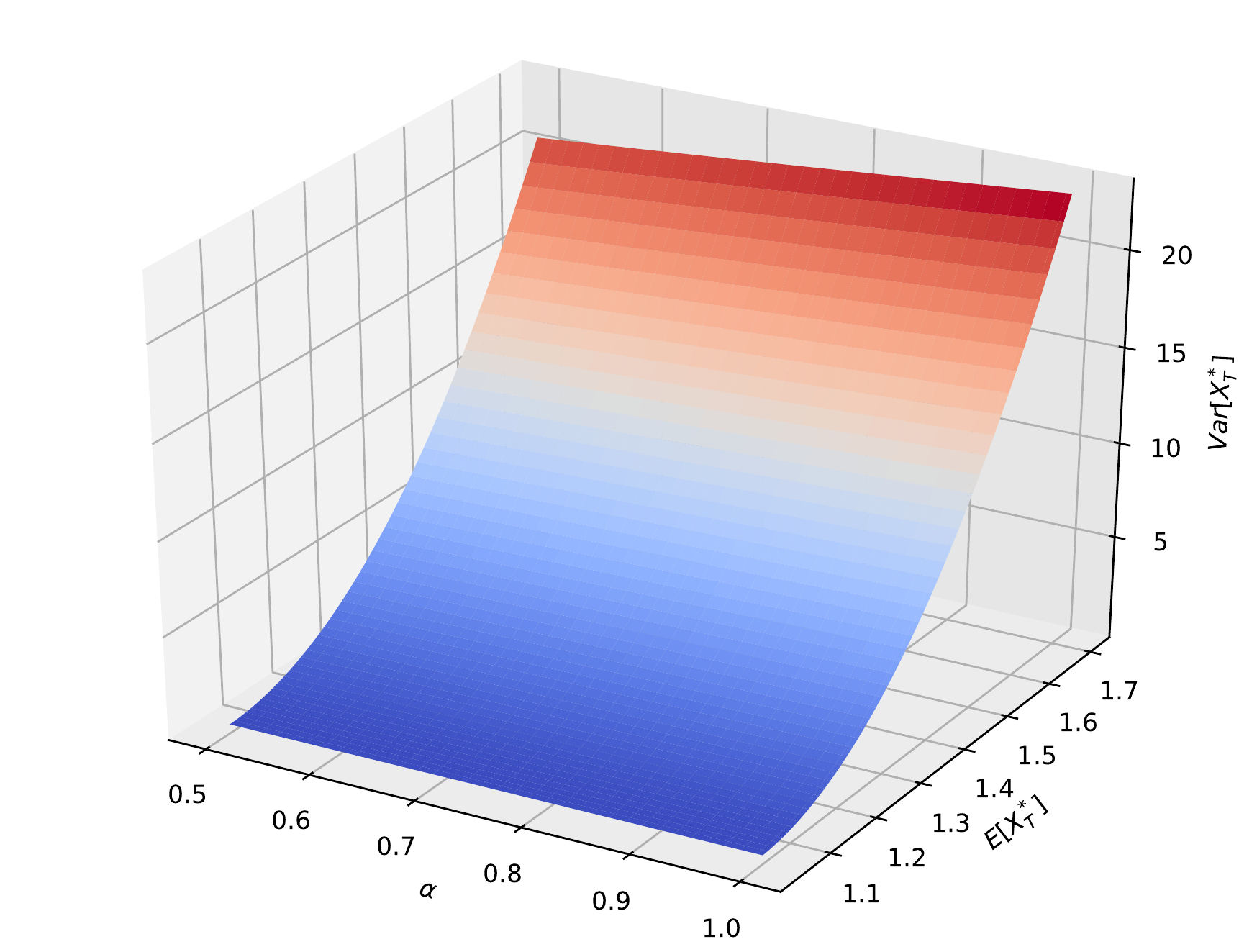}
		\subcaption{Efficient frontier}\label{Fig:effront}
	\end{minipage}%
	\begin{minipage}{0.5\textwidth}
		\centering
		\includegraphics[width=0.8\textwidth]{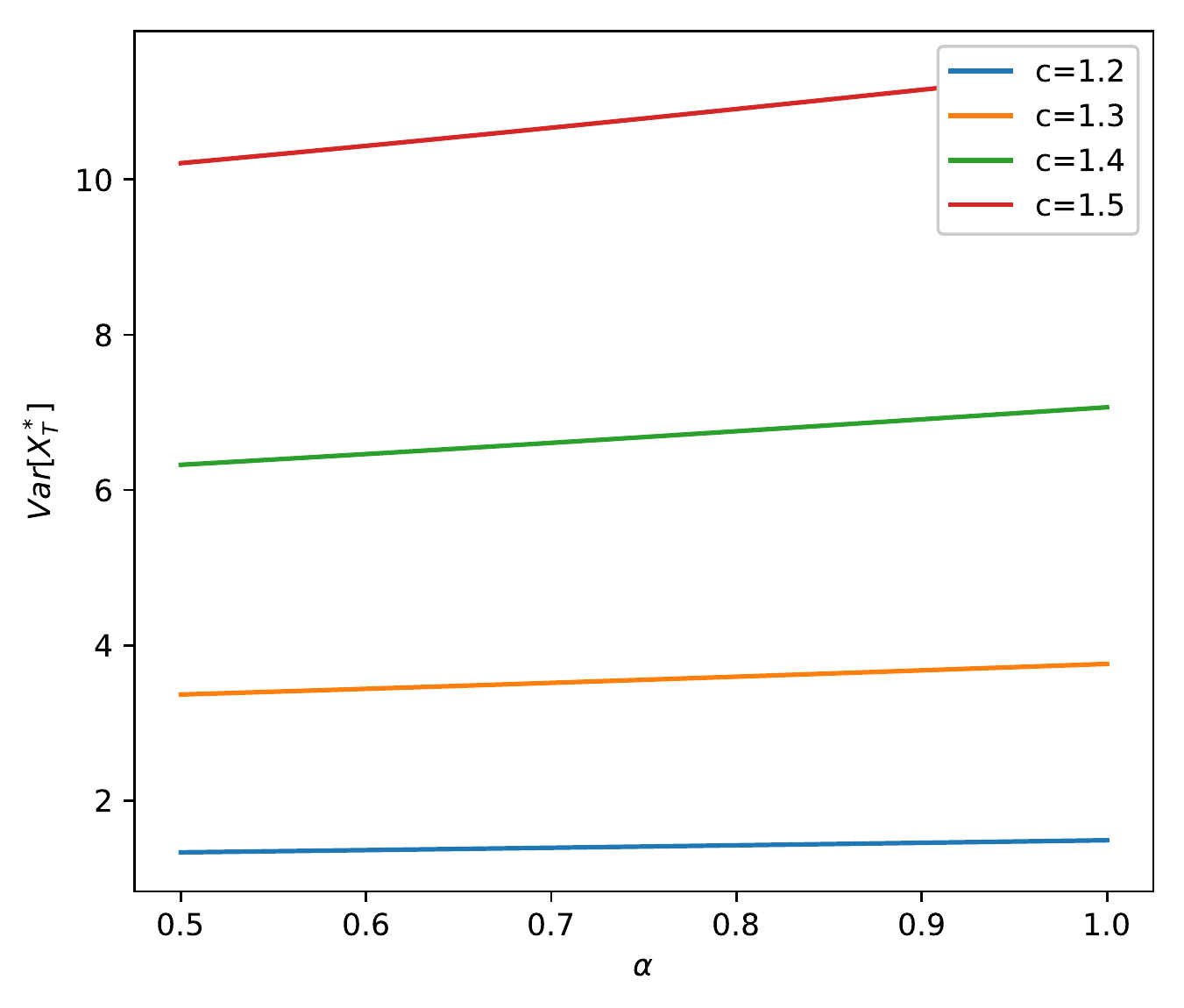}
		\subcaption{Var$[X^*_T]$ under different expected value $c$}\label{Fig:Varc}
	\end{minipage}
	\caption{Plots of the efficient frontier and variance. Roughness parameter $\alpha \in [0.5, 1]$. We set $r=0.03$, $V_0 =0.04$, $x_0 = 1$, $\phi = 0.3$, $\sigma = 0.03$, $\kappa = 0.1$, $\theta = 0.6$, $\rho = -0.7$, $T = 1$, and $c \in [x_0 e^{(r+0.01)T}, \,x_0 e^{(r+0.5)T}]$.}
\end{figure}

In Figures (\ref{Fig:effront})-(\ref{Fig:Varc}), the efficient frontier is shown for different values of $\alpha$ and expected wealth level $c$. Their relationship is clear, and the variance of the optimal wealth is reduced if $\alpha$ decreases, as $M_0$ decreases when $\alpha$ decreases and Var$[X^*_T]$ in (\ref{Eq:VarX*}) is an increasing function on $M_0$. We have also verified Assumption \ref{Assum:V} under the setting in Figures (\ref{Fig:effront})-(\ref{Fig:Varc}).

\section{Conclusion}\label{Sec:Conclusion}
To the best of our knowledge, this is the first study of the continuous-time Markowitz's mean-variance portfolio selection problem under a rough stochastic environment. We specifically focus on the Volterra Heston model. By deriving the optimal strategy and efficient frontier, we obtain further insights into the effect of roughness on them.

There are many possible future research directions. Natural considerations are the utility maximization and time-inconsistency of the MV criterion. In addition, we have already included model ambiguity with rough volatility in our research agenda.

\section*{Acknowledgements}
	The authors would like to thank two anonymous referees and the Editor for their careful reading and valuable comments, which have greatly improved the manuscript.

\appendix
\section{Solutions of Riccati-Volterra equations}\label{Appendix}
To demonstrate the existence and uniqueness of the solution to a Riccati-Volterra equation, we first rephrase the following result from a recent monograph \cite{brunner2017volterra} with more general assumptions. The underlying idea of the proof is the Picard iteration.
\begin{theorem}\label{Thm:Picard}
	Suppose kernel $K(\cdot)$ is bounded or is the fractional kernel with $\alpha \in (0, 1)$. Let $c_0, c_1, c_2$ be constant. Then there exsits $\delta>0$ such that 
	\begin{equation}
	f(t) = \int^t_0 K(t-s) \big[ c_0 + c_1 f(s) + c_2 f^2(s) \big] ds
	\end{equation}
	has a unique continuous solution $f$ on $[0, \delta]$.
\end{theorem}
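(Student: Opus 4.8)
The plan is to recast the equation as a fixed-point problem and apply the Banach contraction principle (Picard iteration) on a short time interval. I would define the operator $\mathcal{T}$ on $C([0,\delta], \R)$ by
\[
(\mathcal{T}f)(t) = \int_0^t K(t-s)\big[ c_0 + c_1 f(s) + c_2 f^2(s) \big]\, ds,
\]
and note that any continuous solution necessarily satisfies $f(0) = 0$. I would work on the closed ball $B_r = \{ f \in C([0,\delta], \R) : \|f\|_\infty \le r \}$ with the sup-norm, which is a complete metric space. The entire argument hinges on the single observation that, writing $\kappa(\delta) \triangleq \int_0^\delta |K(s)|\, ds$, one has $\kappa(\delta) \to 0$ as $\delta \downarrow 0$ in both admissible cases: for a bounded kernel $\kappa(\delta) \le \|K\|_\infty\, \delta$, while for the fractional kernel $\kappa(\delta) = \delta^\alpha / \Gamma(\alpha+1)$, which is finite precisely because $\alpha \in (0,1)$ makes $s^{\alpha-1}$ locally integrable.

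First I would fix the radius, say $r = 1 + |c_0|$, and verify the self-mapping property: for $f \in B_r$, bounding the integrand pointwise gives
\[
\|\mathcal{T}f\|_\infty \le \big(|c_0| + |c_1| r + |c_2| r^2\big)\,\kappa(\delta),
\]
so choosing $\delta$ small enough that $(|c_0| + |c_1| r + |c_2| r^2)\,\kappa(\delta) \le r$ yields $\mathcal{T}(B_r) \subseteq B_r$. Next, for the contraction estimate I would use $f^2 - g^2 = (f+g)(f-g)$ with $|f+g| \le 2r$ on $B_r$ to obtain
\[
\|\mathcal{T}f - \mathcal{T}g\|_\infty \le \big(|c_1| + 2|c_2| r\big)\,\kappa(\delta)\,\|f - g\|_\infty,
\]
and a further shrinking of $\delta$ makes the Lipschitz constant strictly less than one. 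Taking $\delta$ small enough for both requirements, the Banach fixed-point theorem furnishes a unique fixed point of $\mathcal{T}$ in $B_r$, which is the desired continuous solution; the iterates $f_{n+1} = \mathcal{T}f_n$ starting from $f_0 \equiv 0$ converge uniformly to it, making the Picard interpretation explicit.

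The main obstacle is the singular fractional kernel: one must check that $\mathcal{T}$ genuinely maps continuous functions to continuous functions, since $K(s) = s^{\alpha-1}/\Gamma(\alpha)$ is unbounded at the origin and the estimates above tacitly assume $\mathcal{T}f \in C([0,\delta], \R)$. I would establish this via the standard continuity property of the Riemann--Liouville fractional integral, namely that the convolution of an $L^1$ kernel with a bounded measurable function is continuous; this follows from dominated convergence after isolating the moving singularity at $s = t$. For a bounded kernel the step is immediate. Finally, the contraction argument delivers uniqueness only within $B_r$; to upgrade to uniqueness among all continuous solutions on $[0,\delta]$, I would note that any such solution vanishes at $0$ and hence lies in $B_r$ on a possibly smaller subinterval, and then close the gap with a generalized (weakly singular) Gronwall inequality, which forces any two continuous solutions to agree on the full interval $[0,\delta]$.
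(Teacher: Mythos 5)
Your proposal is correct and takes essentially the same approach as the paper: the paper's proof is a one-line citation of Theorems 3.1.2 and 3.1.4 in Brunner's monograph, whose underlying mechanism is exactly the Picard/Banach fixed-point argument you carry out, resting on the local Lipschitz property of the quadratic nonlinearity and on $\int_0^\delta |K(s)|\,ds \to 0$ as $\delta \downarrow 0$ in both the bounded and fractional cases. The only difference is that you execute in full detail (continuity of the convolution with the weakly singular kernel, and the generalized Gronwall step upgrading uniqueness from the ball $B_r$ to all continuous solutions) what the paper delegates to the cited reference.
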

\begin{proof}
	Note that quadratic function is locally Lipschitz; then according to Theorem 3.1.2 and Theorem 3.1.4 in \cite{brunner2017volterra}, the claim holds.
\end{proof}

However, $\delta$ in Theorem \ref{Thm:Picard} is not explicit. Tighter results exist if more assumptions are imposed.

We investigate $g(a,t)$ in (\ref{Eq:g}) first. Based on \cite[Theorem A.5]{gatheral2018forward}, we have
\begin{lemma}\label{Lem:g}
	Suppose {\color{black} Assumption \ref{Assum:K} holds and} $\kappa^2 - 2a\sigma^2 > 0$. Then (\ref{Eq:g}) has a unique global {\color{black} continuous} solution. Moreover, 
	\begin{equation}
	0 < g(a, t) \leq r_2(t) < w_*, \quad \forall \; t > 0,
	\end{equation}
	where $w_* \triangleq \frac{\kappa - \sqrt{\kappa^2 - 2a \sigma^2}}{\sigma^2}$ and $r_2(t) \triangleq Q^{-1}_2 \big( \int^t_0 K(s) ds \big)$; that is, the inverse function of $Q_2$, given by
	\begin{equation}
	Q_2(w) = \int^w_0 \frac{du}{a - \kappa u + \frac{\sigma^2}{2} u^2}.
	\end{equation}
\end{lemma}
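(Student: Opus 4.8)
The plan is to read (\ref{Eq:g}) as a Riccati--Volterra equation driven by the quadratic $F(u) \triangleq a - \kappa u + \tfrac{\sigma^2}{2}u^2$ and to confine its solution below the smaller root of $F$. Since $\kappa^2 - 2a\sigma^2 > 0$, $F$ has two real roots $0 < w_* < w^\ast = \tfrac{\kappa + \sqrt{\kappa^2 - 2a\sigma^2}}{\sigma^2}$ (both positive, as their sum $2\kappa/\sigma^2$ and product $2a/\sigma^2$ are positive), with $F > 0$ on $[0, w_*)$, $F(w_*) = 0$, and $a + \tfrac{\sigma^2}{2}w_*^2 = \kappa w_*$; moreover $w_* < \kappa/\sigma^2$, so $F$ is strictly decreasing on $[0,w_*]$. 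First I would record the properties of the comparison function: differentiating $Q_2(r_2(t)) = \int_0^t K(s)\,ds$ shows that $r_2$ solves the scalar ODE $r_2'(t) = K(t)F(r_2(t))$ with $r_2(0)=0$, so $r_2$ is nondecreasing; and since $F$ vanishes linearly at $w_*$, the integral $Q_2(w)=\int_0^w F(u)^{-1}\,du$ diverges as $w \uparrow w_*$, whence $r_2(t) < w_*$ for every finite $t$. Local existence and uniqueness of a continuous $g$ on some $[0,\delta]$ follow from Theorem \ref{Thm:Picard}.

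The key idea is to eliminate the linear term so as to turn (\ref{Eq:g}) into an order-preserving fixed-point problem. Writing (\ref{Eq:g}) as $(I + \kappa K*)g = a(K*1) + \tfrac{\sigma^2}{2}K*g^2$ and inverting $(I + \kappa K*)^{-1} = I - R_\kappa*$, where $R_\kappa$ is the resolvent of the second kind of $\kappa K$ (so $\kappa K * R_\kappa = \kappa K - R_\kappa$), I would use the identity $(I - R_\kappa*)K = \tfrac{1}{\kappa}R_\kappa$ to obtain the equivalent equation
\begin{equation}
g(t) = \frac{a}{\kappa}\int_0^t R_\kappa(s)\,ds + \frac{\sigma^2}{2\kappa}\int_0^t R_\kappa(t-s)\,g(s)^2\,ds .
\end{equation}
Under Assumption \ref{Assum:K} the kernel $R_\kappa$ is nonnegative and satisfies $\int_0^t R_\kappa(s)\,ds \le 1$ (a standard property of resolvents of completely monotone kernels; cf. Table \ref{Tab:Kernel}). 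Hence the right-hand side is an order-preserving operator in $g$ on $\{g \ge 0\}$, so Picard iteration started at $0$ is monotone. This yields $g(t) \ge \tfrac{a}{\kappa}\int_0^t R_\kappa(s)\,ds > 0$ for $t>0$; and because $a + \tfrac{\sigma^2}{2}w_*^2 = \kappa w_*$, the constant $w_*$ is a supersolution (its image equals $w_*\int_0^t R_\kappa \le w_*$), giving $0 < g \le w_*$ on the whole existence interval. This a priori bound keeps $F(g)$ bounded and rules out blow-up of the quadratic nonlinearity, so the local solution extends to a unique global continuous solution on $[0,T]$.

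It remains to sharpen $g \le w_*$ to the stated $g \le r_2$. I would compare $g$ with the scalar ODE solution $r_2$ by first verifying that $r_2$ is a supersolution of the convolution equation, $r_2 \ge K*F(r_2)$. Indeed, using $r_2(t) = \int_0^t K(s)F(r_2(s))\,ds$ and symmetrizing under $s \mapsto t-s$,
\begin{equation}
r_2(t) - (K*F(r_2))(t) = \int_0^{t/2}\big[K(s) - K(t-s)\big]\big[F(r_2(s)) - F(r_2(t-s))\big]\,ds \ge 0 ,
\end{equation}
since $K$ is nonincreasing (complete monotonicity) and $s \mapsto F(r_2(s))$ is nonincreasing ($r_2$ nondecreasing, $F$ decreasing on $[0,w_*]$), so both factors share the same sign on $(0,t/2)$. \textbf{The main obstacle is precisely this final comparison:} because $F$ is strictly decreasing on $[0,w_*]$, the operator $\phi \mapsto K*F(\phi)$ is order-\emph{reversing}, so the supersolution inequality does not by itself dominate $g$, and naive Gronwall or monotone-iteration arguments fail. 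I would overcome this by transporting the comparison to the order-preserving formulation above (equivalently, by invoking the comparison principle underlying \cite[Theorem A.5]{gatheral2018forward}), which delivers $g \le r_2$; combined with $r_2 < w_*$ this gives the full chain $0 < g(a,t) \le r_2(t) < w_*$ and completes the proof.
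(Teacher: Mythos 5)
Your first two steps are sound and, in fact, correctly replicate part of the machinery behind the cited literature: the resolvent reformulation is right (from $\kappa K * R_\kappa = \kappa K - R_\kappa$ one does get $(\mathrm{Id} - R_\kappa\,*)\,K = \tfrac{1}{\kappa}R_\kappa$, hence $g = \tfrac{a}{\kappa}(1*R_\kappa) + \tfrac{\sigma^2}{2\kappa}R_\kappa * g^2$), the facts $R_\kappa \geq 0$ and $\int_0^t R_\kappa(s)\,ds \leq 1$ are indeed standard for completely monotone kernels, and the supersolution verification $a + \tfrac{\sigma^2}{2}w_*^2 = \kappa w_*$ gives $0 < g \leq w_*$ together with global existence. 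But the decisive quantitative step, $g \leq r_2$, is not proved. You correctly flag the order-reversal obstacle, yet the proposed repair --- ``transporting the comparison to the order-preserving formulation'' --- fails as stated. Writing $d := r_2 - K*F(r_2) \geq 0$ (your symmetrization inequality) and applying $(\mathrm{Id} - R_\kappa\,*)$ to $r_2 + \kappa K * r_2 = a(K*1) + \tfrac{\sigma^2}{2}K*r_2^2 + d$ yields $r_2 = \tfrac{a}{\kappa}(1*R_\kappa) + \tfrac{\sigma^2}{2\kappa}R_\kappa*r_2^2 + d - R_\kappa*d$, and since $R_\kappa \geq 0$ the operator $\mathrm{Id} - R_\kappa\,*$ does \emph{not} preserve nonnegativity: if $d$ is concentrated near the origin, then at later times $d(t)=0$ while $(R_\kappa*d)(t) > 0$. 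So $r_2$ is not shown to be a supersolution of the transformed fixed-point equation, the monotone-iteration squeeze $\mathcal{T}^n 0 \leq r_2$ does not go through, and the inequality $g \leq r_2$ remains open. Your parenthetical fallback --- ``invoking the comparison principle underlying \cite[Theorem A.5]{gatheral2018forward}'' --- is not a repair but a citation of precisely the result in question. Indeed, the paper's entire proof consists of checking that $H(w) = a - \kappa w + \tfrac{\sigma^2}{2}w^2$ satisfies Assumption A.1 of \cite{gatheral2018forward} (with $w_{max} = \kappa/\sigma^2$ and root $w_*$) and quoting \cite[Theorem A.5(c)]{gatheral2018forward}, which delivers global existence, uniqueness and the full chain $0 < g(a,t) \leq r_2(t) < w_*$ in one stroke; once you invoke that theorem, your preceding constructions become redundant rather than a proof.

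A secondary mismatch: you invoke Theorem \ref{Thm:Picard} for local existence, but that theorem assumes $K$ bounded or exactly the fractional kernel, whereas Assumption \ref{Assum:K} admits completely monotone kernels that are neither (e.g.\ the gamma kernel $t^{\alpha-1}e^{-\beta t}$), so the local step as invoked does not cover the stated generality. This one is reparable --- your monotone iteration from $0$ in the transformed equation can itself supply existence of the minimal solution, and uniqueness among continuous solutions follows from the a priori bound $g \leq w_*$ plus a Volterra--Gronwall argument with the $L^1_{loc}$ kernel $R_\kappa$, though you only gesture at this --- but the missing comparison $g \leq r_2$ is a genuine gap, and it is exactly the content the paper outsources to \cite{gatheral2018forward}.
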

\begin{proof}
	To apply the result in \cite[Theorem A.5]{gatheral2018forward}, we define
	\begin{equation*}
	H(w) = a - \kappa w + \frac{\sigma^2}{2} w^2.
	\end{equation*}
	Then $H(w)$ satisfies Assumption A.1 in \cite{gatheral2018forward} with $w_{max} \triangleq \frac{\kappa}{\sigma^2}$ and $w_*$ defined above. The claim follows from \cite[Theorem A.5 (c)]{gatheral2018forward} with $a(t) \equiv 0$ in their theorem.
\end{proof}

For the specific fractional kernel $K(t) = \frac{t^{\alpha-1}}{\Gamma(\alpha)}$,  \cite[Theorem 3.2]{eleuch2018perfect} obtains the following tighter results and the proof is based on the scaling limits of the Hawkes processes.

\begin{lemma}\label{Lem:gfractional}
	If $K(t) = \frac{t^{\alpha-1}}{\Gamma(\alpha)}$, {\color{black}$\alpha \in (1/2, 1)$}, then $g(a, t)$ in (\ref{Eq:g}) satisfies 
	\begin{equation}
	g(a, t) \leq  \frac{c}{\sigma^2} \Big[ \kappa + \frac{t^{-\alpha}}{\Gamma(1-\alpha)} + \sigma \sqrt{ a_0(t) - a} \Big],
	\end{equation}
	with $a_0(t) = \frac{1}{2\sigma^2} \big[ \kappa + \frac{t^{-\alpha}}{\Gamma(1-\alpha)} \big]^2$ and a constant $c>0$. In other words, if $ a < a_0(T)$, then Assumption \ref{Assum:V} is satisfied. 
\end{lemma}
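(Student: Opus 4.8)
The plan is to exploit the special structure of the fractional kernel to turn (\ref{Eq:g}) into a fractional differential equation, establish an a priori upper bound that precludes blow-up on $[0,T]$, and then feed a global bounded solution into Theorem \ref{Thm:ExpV}. Since $K(t)=t^{\alpha-1}/\Gamma(\alpha)$ gives $K*f = I^\alpha f$ (fractional integration of order $\alpha$), equation (\ref{Eq:g}) reads $g = I^\alpha[a - \kappa g + \frac{\sigma^2}{2}g^2]$, which upon applying the fractional derivative $D^\alpha$ becomes the fractional Riccati equation $D^\alpha g(a,t) = a - \kappa g(a,t) + \frac{\sigma^2}{2}g^2(a,t)$ with $g(a,0)=0$. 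A unique continuous solution on a small interval follows from Theorem \ref{Thm:Picard}; moreover $g(a,\cdot) \geq 0$ by comparison with the solution of the linear equation $D^\alpha y = a - \kappa y$, which is nonnegative since $a>0$. The whole problem thus reduces to a global upper bound showing that the solution does not explode before time $T$.

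The hard part is precisely this no-blow-up estimate: the quadratic term $\frac{\sigma^2}{2}g^2$ produces Riccati-type finite-time explosion, and because the fractional kernel is singular at the origin the elementary ODE comparison principle does not apply directly. The route I would follow is the probabilistic one of \cite{eleuch2018perfect}: represent the rough Heston variance as the scaling limit of a sequence of nearly unstable heavy-tailed Hawkes processes, so that finiteness of $\E[\exp(a\int_0^T V_s\,ds)]$ is equivalent to subcriticality of the associated branching mechanism, and the quantitative subcriticality condition is exactly $a < a_0(T)$. The explicit bound $g(a,t) \leq \frac{c}{\sigma^2}[\kappa + t^{-\alpha}/\Gamma(1-\alpha) + \sigma\sqrt{a_0(t)-a}]$ then emerges from moment estimates on the Hawkes intensity. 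The singular term $t^{-\alpha}/\Gamma(1-\alpha)$ is nothing but $D^\alpha 1$, the density of the first-kind resolvent $L$ of Section \ref{Sec:Model}, which is why it appears in the threshold; note also that $a_0(t) = \frac{1}{2\sigma^2}[\kappa + t^{-\alpha}/\Gamma(1-\alpha)]^2$ is decreasing in $t$, so $a < a_0(T)$ guarantees $a_0(t) - a > 0$ for every $t \in (0,T]$ and the square root is well defined. A purely analytic alternative would combine a fractional comparison principle with the resolvent identities of Section \ref{Sec:Model}, but the Hawkes argument delivers the constant $c$ directly.

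Finally, once the bound gives a continuous, hence bounded, solution $g(a,\cdot)$ on all of $[0,T]$, the hypotheses of Theorem \ref{Thm:ExpV} are met, so $\E[\exp(a\int_0^T V_s\,ds)] = \exp[\kappa\phi\int_0^T g(a,s)\,ds + V_0\int_0^T(a - \kappa g(a,s) + \frac{\sigma^2}{2}g^2(a,s))\,ds]$, and the boundedness of $g$ on $[0,T]$ makes the exponent finite. This is exactly the content of Assumption \ref{Assum:V}, so the condition $a < a_0(T)$ indeed suffices.
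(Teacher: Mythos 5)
Your proposal is correct and follows essentially the same route as the paper: the paper does not prove the bound itself but imports it directly from \cite[Theorem 3.2]{eleuch2018perfect}, whose proof is precisely the Hawkes-process scaling-limit argument you invoke, and the final implication ``$a < a_0(T) \Rightarrow$ Assumption \ref{Assum:V}'' is obtained, as you say, by feeding the resulting global continuous solution of \eqref{Eq:g} into Theorem \ref{Thm:ExpV}. Your additional remarks (the fractional-Riccati recasting, nonnegativity of $g$, and the monotonicity of $a_0(t)$ ensuring $a_0(t)-a>0$ on $(0,T]$) are accurate but go slightly beyond what the paper records, which is a bare citation.
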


Next, we study $\psi(\cdot)$ in (\ref{Eq:psi}). (\ref{Eq:psi}) has a unique continuous solution on some interval $[0, \delta]$ if the conditions in Theorem \ref{Thm:Picard} are satisfied. Without Theorem \ref{Thm:Picard}, we also have the following result.

\begin{lemma}\label{Lem:psi}
	{\color{black} Suppose Assumption \ref{Assum:K} holds.}
	\begin{enumerate}[label={(\arabic*).}]
		\item  If $1 - 2\rho^2 > 0$, then (\ref{Eq:psi}) has a unique global {\color{black} continuous} solution $\psi \in L^2_{loc}(\R_+, \R)$ and $\psi < 0$ for $t>0$. 
		\item  If $1 - 2\rho^2 = 0$, then (\ref{Eq:psi}) is linear and has a unique continuous solution on $[0, T]$.
		\item  If $1 - 2\rho^2 < 0$, further assume $\lambda >0$ and $\lambda^2 + 2(1-2\rho^2)\theta^2\sigma^2 >0$. Then (\ref{Eq:psi}) has a unique global {\color{black} continuous} solution. Moreover,
		\begin{equation}
		\frac{\bar w_*}{1-2\rho^2} < \frac{\bar r_2(t)}{1-2\rho^2} \leq \psi(t) < 0, \quad \forall \; t > 0,
		\end{equation}
		with $\bar w_* = \frac{\lambda - \sqrt{\lambda^2 + 2(1-2\rho^2)\theta^2\sigma^2}}{ \sigma^2}$ and $\bar r_2(t) \triangleq \bar Q^{-1}_2 \big( \int^t_0 K(s) ds \big)$, where
		\begin{equation}
		\bar Q_2(w) = \int^w_0 \frac{du}{ \frac{\sigma^2}{2} u^2 - \lambda u - (1-2\rho^2)\theta^2}.
		\end{equation}
	\end{enumerate}
\end{lemma}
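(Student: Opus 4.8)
The plan is to follow the trichotomy in the statement and, in each case, reduce (\ref{Eq:psi}) to a Riccati--Volterra equation of exactly the shape of (\ref{Eq:g}), for which the existence theory behind Lemma \ref{Lem:g} (namely \cite[Theorem A.5]{gatheral2018forward}) together with the local Picard result Theorem \ref{Thm:Picard} is available. The step common to all three cases is local existence and uniqueness: the right-hand side of (\ref{Eq:psi}) is $K*H(\psi)$ with $H(w)=\tfrac{(1-2\rho^2)\sigma^2}{2}w^2-\lambda w-\theta^2$ a quadratic, hence locally Lipschitz, so Theorem \ref{Thm:Picard} furnishes a unique continuous solution on some $[0,\delta]$. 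What remains, and what distinguishes the three cases, is extending this to a global solution and pinning down the sign and the bounds; this hinges on an a priori estimate that keeps the solution in a region where $H$ is controlled. Membership in $L^2_{loc}(\R_+,\R)$ is then automatic from continuity of the global solution on every $[0,T]$.

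Case (2), $1-2\rho^2=0$, is immediate: (\ref{Eq:psi}) degenerates to the linear Volterra equation $\psi=K*(-\lambda\psi-\theta^2)$, which is solved globally and uniquely through the resolvent $R_\lambda$ of $\lambda K$ defined in (\ref{Eq:R_lambda}); a linear Volterra equation admits no blow-up, so the solution is continuous on all of $[0,T]$. For Case (3), $1-2\rho^2<0$, the device is the substitution $\tilde\psi=(1-2\rho^2)\psi$ already employed in the proof of Theorem \ref{Thm:M}, under which $\tilde\psi$ solves $\tilde\psi=K*\big(\tfrac{\sigma^2}{2}\tilde\psi^2-\lambda\tilde\psi-(1-2\rho^2)\theta^2\big)$. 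This is precisely of the form (\ref{Eq:g}) with driving $\bar H(w)=\tfrac{\sigma^2}{2}w^2-\lambda w-(1-2\rho^2)\theta^2$, and the hypotheses $\lambda>0$ and $\lambda^2+2(1-2\rho^2)\theta^2\sigma^2>0$ are exactly what forces $\bar H$ to have two positive roots with smaller root $\bar w_*$, so that $\bar H$ meets the assumption of \cite[Theorem A.5]{gatheral2018forward}. Applying that theorem verbatim yields a unique global continuous $\tilde\psi$ with $0<\tilde\psi\le\bar r_2<\bar w_*$, where $\bar r_2=\bar Q_2^{-1}\big(\int_0^\cdot K\big)$; dividing through by $1-2\rho^2<0$ reverses every inequality and reproduces the stated chain $\tfrac{\bar w_*}{1-2\rho^2}<\tfrac{\bar r_2(t)}{1-2\rho^2}\le\psi(t)<0$.

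Case (1), $1-2\rho^2>0$, is the delicate one and I expect it to be the main obstacle. The same substitution again gives $\tilde\psi=K*\bar H(\tilde\psi)$, but now $\bar H(0)=-(1-2\rho^2)\theta^2<0$, so $\bar H$ has one negative root $w_-$ and one positive root, and the positive-solution regime of \cite[Theorem A.5]{gatheral2018forward} used for $g$ and for Case (3) no longer applies directly. I foresee two complementary routes. The first is to run the comparison argument in the ``negative'' regime, showing that any local solution is trapped in $(w_-,0]$, so that $\bar H(\tilde\psi)$ stays bounded and the local solution extends to all of $[0,T]$. The second, which sidesteps the trapping bound, is to identify $\tilde\psi$ with the Riccati--Volterra solution governing $\tildeE\big[\exp(-\theta^2(1-2\rho^2)\int_t^T V_s\,ds)\mid\cF_t\big]$ exactly as in the proof of Theorem \ref{Thm:M}; since $1-2\rho^2>0$ the exponent is nonpositive, this conditional expectation lies in $(0,1]$, and the affine transform formula (\cite[Theorem 4.3]{abi2017affine}, with the martingale condition supplied by \cite[Lemma 7.3]{abi2017affine}) then forces global existence of the Riccati solution. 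The sign $\psi<0$ follows because $\bar H(0)<0$ makes the driving strictly negative at the origin, so $\tilde\psi(t)=\int_0^t K(t-s)\bar H(\tilde\psi(s))\,ds<0$ for small $t$, after which the completely monotone structure of $K$ prevents $\tilde\psi$ from returning to $0$.

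The genuinely hard part is making the trapping/return argument in Case (1) rigorous for Volterra rather than ODE dynamics: a crude pointwise bound such as $\tilde\psi(t)\ge \big(\min_{[w_-,0]}\bar H\big)\int_0^t K$ does not on its own keep $\tilde\psi$ above $w_-$, so one cannot simply invoke an ODE comparison principle. The sign of $\bar H$ on $(w_-,0)$ and the complete monotonicity of $K$ must be used together — precisely the structure packaged inside \cite[Theorem A.5]{gatheral2018forward} and the existence theory of \cite{abi2017affine} — which is why I would route Case (1) through those results rather than estimate the Volterra convolution by hand.
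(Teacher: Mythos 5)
Your cases (2) and (3) are correct and essentially identical to the paper's proof: case (2) is the linear Volterra equation, solvable globally via the resolvent (the paper cites \cite[Theorem 1.2.3]{brunner2017volterra}, which is the same content as your resolvent argument), and case (3) uses exactly the paper's substitution $\tilde\psi=(1-2\rho^2)\psi$ followed by \cite[Theorem A.5]{gatheral2018forward} applied to $H(w)=\frac{\sigma^2}{2}w^2-\lambda w-(1-2\rho^2)\theta^2$, whose two positive roots under $\lambda>0$ and $\lambda^2+2(1-2\rho^2)\theta^2\sigma^2>0$ give $0<\tilde\psi(t)\le\bar r_2(t)<\bar w_*$, with the inequalities reversed upon dividing by $1-2\rho^2<0$.

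Case (1), which you correctly flag as the delicate one, is where the gap lies, and neither of your two routes closes it. Route (b) is circular as stated: \cite[Theorem 4.3]{abi2017affine} is a verification result --- it \emph{assumes} a global solution of the Riccati--Volterra equation (plus a martingale condition, supplied by \cite[Lemma 7.3]{abi2017affine}) and then yields the exponential-affine formula. The boundedness of $\tildeE\big[\exp(-\theta^2(1-2\rho^2)\int_t^T V_s\,ds)\,\big|\,\cF_t\big]$ in $(0,1]$ cannot be fed back through that theorem to ``force'' global existence of $\psi$; the logical direction only runs from the analytic solution to the probabilistic formula. Route (a), the trapping bound keeping $\tilde\psi$ in $(w_-,0]$, you leave incomplete by your own admission, and indeed a pointwise convolution estimate does not by itself yield an invariance principle for Volterra dynamics. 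What actually closes the case --- and what the paper does in one line --- is to cite \cite[Theorem 7.1]{abi2017affine} directly: in the regime of nonpositive constant forcing (here $-\theta^2<0$, equivalently $-(1-2\rho^2)\theta^2<0$ for $\tilde\psi$ when $1-2\rho^2>0$), that theorem itself establishes global existence, uniqueness in $L^2_{loc}(\R_+,\R)$, and the sign of the solution, by analytic a priori estimates, not by the probabilistic representation. A second omission follows from this inversion: you assert $L^2_{loc}$ membership is ``automatic from continuity,'' but the cited theory delivers an $L^2_{loc}$ solution first, and global continuity must then be recovered from uniqueness together with \cite[Theorem 12.1.1]{gripenberg1990volterra}, as the paper notes --- Theorem \ref{Thm:Picard} gives continuity only on some small interval $[0,\delta]$.
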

\begin{proof}
	The claim in (1) follows from \cite[Theorem 7.1]{abi2017affine}. {\color{black} The continuity follows from the uniqueness of the global solution and \cite[Theorem 12.1.1]{gripenberg1990volterra}.} The claim in (2) is classic and can be found in \cite[Theorem 1.2.3]{brunner2017volterra}. For (3), we consider $ \tilde \psi = (1 - 2 \rho^2) \psi$. Then $\tilde \psi $ satisfies
	\begin{equation}\label{Eq:tildepsi}
	\tilde \psi = K * \big( \frac{\sigma^2}{2} \tilde \psi^2 - \lambda \tilde \psi - (1 - 2\rho^2) \theta^2 \big).
	\end{equation} 
	Define 
	\begin{equation}
	H(w) = \frac{\sigma^2}{2} w^2 - \lambda w - (1 - 2\rho^2) \theta^2.
	\end{equation}
	Then $\bar w_*$ is the unique root of $H(w) = 0$ on $(-\infty, \bar w_{max}]$ with $\bar w_{max} = \frac{\lambda}{\sigma^2}$. $H(w)$ satisfies Assumption A.1 in \cite{gatheral2018forward}. Therefore, \cite[Theorem A.5 (c)]{gatheral2018forward} with $a(t) \equiv 0$ implies (\ref{Eq:tildepsi}) has a unique global continuous solution and 
	\begin{equation}
	0 < \tilde \psi(t) \leq \bar r_2(t) < \bar w_*, \quad \forall \; t > 0.
	\end{equation}
	Note $ \tilde \psi = (1 - 2 \rho^2) \psi$. This gives the result desired.
\end{proof}

{\color{black}
	\section{Positivity of integrals with forward variance}\label{App:Pos}
	\begin{lemma}\label{Lem:Positive}
		Suppose Assumption \ref{Assum:K} holds. The forward variance $\xi_t(s)$ in \eqref{Eq:xi} satisfies $\int^T_t \xi_t(s) ds > 0$, $\p$-$\as$, for every $ t \in [0, T)$.
	\end{lemma}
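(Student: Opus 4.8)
The plan is to peel off two soft reductions and then isolate a single fixed-time statement about the law of $V_s$. First I would record the crucial sign fact: by Theorem \ref{Thm:SVSol} the solution is $\R_+$-valued, so $V_s \ge 0$ for every $s$, whence $\xi_t(s) = \tildeE[V_s \mid \cF_t] \ge 0$ $\p$-$\as$ for each $s \in [t,T]$, and therefore $\int_t^T \xi_t(s)\,ds \ge 0$. The only issue is strictness. I would also note at the outset that the density in (\ref{Eq:tildeP}) is a strictly positive true martingale, so $\tilde\p \sim \p$; hence any almost-sure statement may be proved under whichever of the two measures is convenient.

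Next I would trade the integral over $s$ for a single conditional expectation. Since all integrands are nonnegative, Tonelli gives $\int_t^T \xi_t(s)\,ds = \tildeE\big[\int_t^T V_s\,ds \,\big|\, \cF_t\big]$, and because $\int_t^T V_s\,ds \ge 0$, the $\cF_t$-event $\{\int_t^T \xi_t(s)\,ds = 0\}$ forces $\int_t^T V_s\,ds = 0$ $\tilde\p$-a.s.\ on it. Thus $\tilde\p\big(\int_t^T \xi_t(s)\,ds = 0\big) \le \tilde\p\big(\int_t^T V_s\,ds = 0\big)$, and it suffices to show the right-hand side vanishes. Since $V$ has continuous nonnegative paths, this amounts to ruling out $V \equiv 0$ on all of $[t,T]$, which in turn follows once I know $V$ charges no time at the boundary, i.e.\ $\int_0^T \1{V_s = 0}\,ds = 0$ a.s.; by Tonelli again this reduces to the fixed-time no-atom statement $\p(V_s = 0) = 0$ for (Lebesgue-almost) every $s > 0$.

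The fixed-time statement is where the real work lies, and I expect it to be the main obstacle, precisely because $V$ is a (possibly rough) Volterra square-root process whose boundary behaviour is delicate. The route I would take is through the Laplace transform: for $a > 0$ the exponential-affine formula of \cite[Theorem 4.3]{abi2017affine} (the same machinery underlying Theorem \ref{Thm:ExpV}) expresses $\E[e^{-a V_s}]$ in terms of the solution of a Riccati--Volterra equation with source $-a$. Since $e^{-a V_s} \downarrow \1{V_s = 0}$ as $a \to \infty$, dominated convergence yields $\p(V_s = 0) = \lim_{a\to\infty}\E[e^{-a V_s}]$, so the claim is equivalent to showing this transform tends to $0$. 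Establishing that limit — controlling the associated Riccati--Volterra solution as the source blows up and showing the exponent diverges to $-\infty$ for each fixed $s>0$ — is the crux. For the constant kernel it is immediate from the explicit noncentral chi-square law of the CIR marginal, which has a density on $(0,\infty)$ for $s>0$; for singular kernels one argues via the transform asymptotics, or else invokes the known absolute continuity of the one-dimensional marginals of Volterra square-root processes.

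Finally, once $\p(V_s = 0) = 0$ is in hand for a.e.\ $s$, I would close the loop: $\tildeE\big[\int_t^T \1{V_s=0}\,ds\big] = \int_t^T \tilde\p(V_s = 0)\,ds = 0$ gives $V_s > 0$ for a.e.\ $s \in [t,T]$, $\tilde\p$-a.s., whence $\int_t^T V_s\,ds > 0$ a.s.; combined with the two reductions above this yields $\int_t^T \xi_t(s)\,ds > 0$ $\p$-$\as$ for every $t \in [0,T)$, as required.
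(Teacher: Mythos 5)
Your two opening reductions coincide with the paper's: writing $\int_t^T \xi_t(s)\,ds = \tildeE\big[\int_t^T V_s\,ds \mid \cF_t\big]$ and using non-negativity of $V$ to reduce the lemma to $\int_t^T V_s\,ds > 0$, $\p$-$\as$ (the equivalence $\tilde\p \sim \p$ is also fine). The problem is your third reduction, which trades this event for the fixed-time no-atom statement $\p(V_s = 0) = 0$ for a.e.\ $s > 0$ — a statement you yourself flag as the crux and then do not prove. For the general completely monotone, possibly singular, kernel of Assumption \ref{Assum:K}, neither of your suggested routes is available off the shelf: the Laplace-transform route requires the exponential-affine formula with pointwise forcing $u = -a$, i.e.\ a Riccati--Volterra equation of the form $\psi_a = -aK + K*\big(-\kappa\psi_a + \tfrac{\sigma^2}{2}\psi_a^2\big)$, together with a proof that the resulting exponent diverges to $-\infty$ as $a \to \infty$ for each fixed $s > 0$ — a genuinely nontrivial asymptotic analysis of a nonlinear Volterra equation with singular kernel that you only gesture at; and ``the known absolute continuity of the one-dimensional marginals of Volterra square-root processes'' is not something you can simply invoke — no such result appears in the paper's references, and it was not standard at the time beyond the classical CIR case. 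As it stands, the proposal is a correct chain of soft reductions terminating in an unproven key claim: a genuine gap.

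The paper's proof shows that the marginal-law analysis is unnecessary, because the event to be excluded is not $\{V_s = 0\}$ at a fixed time but $\{V \equiv 0 \text{ on } [t,T]\}$, and vanishing on an \emph{interval} can be ruled out pathwise. Fix $\omega$ with a continuous path and suppose $\int_t^T V_s(\omega)\,ds = 0$; continuity and non-negativity force $V_s(\omega) = 0$ for all $s \in [t,T]$. Plugging this back into the Volterra equation at time $t+h$ (with $0 < h < T-t$), the drift and stochastic integral over $[t, t+h]$ are computed with $V \equiv 0$, so the stochastic integral vanishes and the drift contributes $\kappa\phi\int_t^{t+h} K(t+h-s)\,ds$; the only non-obvious ingredient is that the contribution of the history up to time $t$ is non-negative, which is exactly \cite[Theorem 3.5, Equation (3.8)]{abi2017affine}. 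Strict positivity of $K$ (Assumption \ref{Assum:K}) and $\kappa\phi > 0$ then yield $V_{t+h}(\omega) \geq \kappa\phi\int_t^{t+h} K(t+h-s)\,ds > 0$, contradicting $V_{t+h}(\omega) = 0$. If you want to repair your argument, this pathwise lower bound is the ingredient to import: it closes the strictness question directly, with no statement about the law of $V_s$ at any fixed time.
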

	\begin{proof}
		As $\int^T_t \xi_t(s) ds = \tildeE [\int^T_t V_s ds| \cF_t]$ and $V_s$ is non-negative by Theorem \ref{Thm:SVSol}, it is sufficient to show that $\int^T_t V_s ds > 0$, $\p$-$\as$. 
		
		Given $t \in [0, T)$, for $\omega \in \Omega$ such that $V_s(\omega)$ is continuous in $s$, we suppose $\int^T_t V_s(\omega) ds = 0$. By the continuity of $V_s(\omega)$, $V_s(\omega) = 0$ for $s \in [t, T]$. Using the argument given in \cite[Theorem 3.5, Equation (3.8)]{abi2017affine}, for $0 < h < T - t$, we have
		\begin{align}
		V_{t + h}(\omega) =& V_0 +  \int_{0}^{t} K(t+h -s)\left(\kappa\phi - \lambda V_s (\omega) \right) d s + \int_{0}^{t} K(t+h-s) \sigma \sqrt{V_{s} (\omega)} d \tilde B_{s} (\omega) \nonumber \\
		& + \int_{t}^{t+h} K(t+h -s)\left(\kappa \phi - \lambda V_{s} (\omega) \right) d s + \int_{t}^{t+h} K(t+h-s) \sigma \sqrt{V_{s} (\omega) } d \tilde B_{s} (\omega) \nonumber \\
		\geq& \int_{t}^{t+h} K(t+h -s)\left( \kappa \phi - \lambda V_{s} (\omega) \right) d s + \int_{t}^{t+h} K(t+h-s) \sigma \sqrt{V_{s} (\omega) } d \tilde B_{s}(\omega).
		\end{align}
		As $V_s (\omega) = 0,\, s \in [t, t+h]$, then
		\begin{equation}
		V_{t + h} (\omega) \geq \kappa \phi \int_{t}^{t+h} K(t+h - s) ds > 0.
		\end{equation}
		This contradiction implies that $\int^T_t V_s ds > 0$, $\p$-$\as$, and the claim follows.
	\end{proof}
}

\end{document}